\title{Test \& Roll: Profit-Maximizing A/B Tests}
\author{Elea McDonnell Feit \\
LeBow College of Business \\ 
Drexel University \\ 
eleafeit@gmail.com 
\and 
Ron Berman \\ 
The Wharton School \\ 
University of Pennsylvania \\ 
ronber@wharton.upenn.edu }
\date{\today}
\begin{document}

\maketitle
\thispagestyle{empty} 
\vspace{3in}
\noindent We thank Julie Albers, Eduardo Azevedo, Eric Bradlow, Raghu Iyengar, Pete Fader, Bruce McCullough, and Christophe Van den Bulte for helpful discussions and suggestions and we gratefully acknowledge support from a 2017 Adobe Digital Experience Research Award.

\newpage
\pagenumbering{arabic} 

\begin{center}
\Large
\thetitle
\end{center}
\vspace{0.5in}

\begin{abstract}
Marketers often use A/B testing as a tool to compare marketing treatments in a test stage and then deploy the better-performing treatment to the remainder of the consumer population. While these tests have traditionally been analyzed using hypothesis testing, we re-frame them as an explicit trade-off between the opportunity cost of the test (where some customers receive a sub-optimal treatment) and the potential losses associated with deploying a sub-optimal treatment to the remainder of the population. 

We derive a closed-form expression for the profit-maximizing test size and show that it is substantially smaller than typically recommended for a hypothesis test, particularly when the response is noisy or when the total population is small. The common practice of using small holdout groups can be rationalized by asymmetric priors. The proposed test design achieves nearly the same expected regret as the flexible, yet harder-to-implement multi-armed bandit under a wide range of conditions. 

We demonstrate the benefits of the method in three different marketing contexts -- website design, display advertising and catalog tests -- in which we estimate priors from past data. In all three cases, the optimal sample sizes are substantially smaller than for a traditional hypothesis test, resulting in higher profit.

\textbf{Keywords:} A/B Testing, Randomized Controlled Trial, Marketing Experiments, Bayesian Decision Theory, Sample Size
\end{abstract}

\newpage

\section{Introduction}

Experimentation is an important tool for marketers in a wide range of settings including direct mail, email, display advertising, social media marketing, website optimization, and app design. In tactical marketing settings, which we call ``test \& roll'' experiments, data on customer response is first collected in a test stage where a subset of customers are randomly assigned to a treatment. In the roll stage that follows, marketers deploy one treatment to all remaining customers based on the test results.

Figure \ref{fig:test_setup} shows an example test \& roll setup screen. Emails with two different subject lines will each be sent to 8,910 customers at random from a total list of 59,404 email addresses. Once the test outcomes are measured, the platform sends the better-performing email to the remainder of the list. 
\begin{figure}[ht]
\centering
\includegraphics[width=0.7\textwidth]{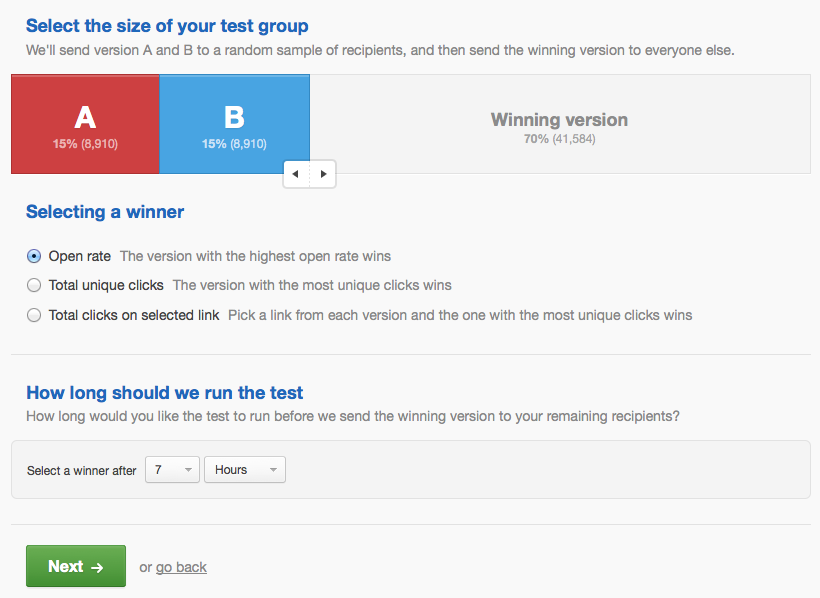}
\caption{Typical test \& roll setup. (Screenshots from the email marketing tool Campaign Monitor, as described on the \href{https://zapier.com/learn/email-marketing/ab-testing-email-marketing/}{Zapier.com blog}.)}
\label{fig:test_setup}
\end{figure}

Traditionally, such randomized controlled trials are analyzed with a significance test, where the null hypothesis of equal mean response of two treatments is rejected if
\begin{equation} 
\overline{y}_1 - \overline{y}_2 \geq z_{1-\alpha/2} \sqrt{\frac{s_1^2}{n_1} + \frac{s_2^2}{n_2}}
\label{eq:null_hypothesis_test}
\end{equation}
\noindent where $\overline{y}_1$ and $\overline{y}_2$ are the mean response for each test group, $s_1$ and $s_2$ are the standard deviation of the response, $n_1$ and $n_2$ are the sample sizes, and the significance level $\alpha$ is the desired type I error rate that determines the critical value $z$.\footnote{We focus on a $z$-test for simplicity. The test of proportions is similar.}

When using hypothesis testing, the sample size is fixed prior to data collection and $n_1$ and $n_2$ are set to detect an effect of at least $d$ with probability $1 - \beta$. When $s_1 = s_2 = s$, the recommended sample size is: 
\begin{equation}
n_{HT}=n_1 = n_2 \approx (z_{1-\alpha/2} + z_\beta)^2 \left( \frac{2 s^2}{d^2} \right)\\
\label{eq:nht_sample_size}
\end{equation}
The recommendation is to set $n_1 = n_2$ because this maximizes the statistical power of the experiment when $s_1 = s_2$. 
 
We develop an alternative approach to planning and analyzing A/B tests with finite populations. While null hypothesis testing is the ``gold standard'' in scientific and medical research and is often recommended for marketing tests \citep[e.g.,][]{PekelisWalshJohari2015WP}, the statistical significance threshold in \eqref{eq:null_hypothesis_test} is a poor decision rule for test \& roll experiments aimed at maximizing profits, for four reasons.

First, hypothesis tests at typical significance levels (e.g., $\alpha = 0.05$) are designed to avoid concluding that two treatments perform differently when they do not. Yet these Type I errors have little consequence for profit, assuming no deployment costs. If the null can not be rejected and both treatments yield identical effects, the same profit will be earned regardless of which treatment is deployed. Because of the profit trade-off between test-stage learning and roll-stage earning, conservative sample sizes based on null hypothesis testing lower overall expected profit, by exposing too many people to the less effective treatment in the test. 

Second, the population available for testing and deploying is often limited, but the recommended sample size in \eqref{eq:nht_sample_size} does not take this constraint into account. In online advertising experiments where effects are often small (but profitable), the recommended sample size may be larger than the size of the population itself \citep{LewisRao2015}.\footnote{\label{eq:nht_fpc_sample_size} The seldom-used finite population correction (FPC) will recommend sample sizes smaller than the population, however this correction does not account for the opportunity cost of the test and does not maximize profit. The FPC adjusts the standard error to correct for the inaccuracy of the central limit theorem when sampling from a finite population of size $N$ without replacement resulting in a recommended sample size of 
$n_{FPC} = \frac{2 N (z_{1-\alpha/2}+z_{\beta})^2 s^2}{ (N-1) d^2 + 4 s^2 (z_{1-\alpha/2}+z_{\beta})^2}
$. We thank an anonymous reviewer for suggesting this as a benchmark.} Yet, as we show, when the population is limited, smaller tests that will never reach statistical significance can still have substantial benefit in improving expected profit. 

Third, the typical null hypothesis test in \eqref{eq:null_hypothesis_test} provides no guidance on which treatment to deploy when the results are not significant. Many A/B testers advocate deploying the incumbent treatment (if there is one) in the interest of being ``conservative",  choosing randomly \citep{Gershoff2017}, or continuing the test until it reaches significance \citep[e.g.,][]{berman2018}. 

Fourth, practitioners often design tests with unequal sample sizes for each treatment \citep[e.g.,][]{LewisRao2015, ZantedeschiFeitBradlow2017}. Our framework allows unequal sample sizes to arise naturally from prior beliefs, whereas this can not be rationalized under classical hypothesis testing  when response variance is equal ($s_1^2 = s_2^2$).

We re-frame the test \& roll decision problem in Section \ref{sec:model}, focusing on profit and making an explicit trade-off between the opportunity cost of the test (where some customers receive the sub-optimal treatment) and the losses associated with deploying the sub-optimal treatment to the remainder of the finite population. Effectively, the problem we define can be seen as a constrained version of a multi-armed bandit, where there are only two allocation decisions instead of many.

We derive a new closed-form expression for the profit-maximizing sample size in Section \ref{sec:normal}, assuming that the average revenue per customer is normally-distributed with normal priors. Test sample sizes under this framework are often substantially smaller than those recommended by (\ref{eq:nht_sample_size}). Unlike sample sizes for a hypothesis test that increase linearly with the \emph{variance} of the response in \eqref{eq:nht_sample_size}, profit-maximizing sample sizes increase sub-linearly with the \emph{standard deviation} of the response, leading to substantially smaller test sizes when the response is noisy. Profit-maximizing samples are also proportional to the square root of the total size of the population available, and so they naturally scale to both large and small settings.

Improved performance is achieved because profit-maximizing tests identify the best performing treatment with high probability when treatment effects are large; the lost profit (regret) from errors in treatment selection is small when treatment effects are small. We also show that a test \& roll with the profit-maximizing sample size achieves nearly the same level of regret as the popular Thompson sampling solution to the multi-armed bandit problem \citep{scott2010,schwartzBradlowFader2016WP}; both have regret of $O(\sqrt{N})$. Although sub-optimal relative to a multi-armed bandit, the profit-maximizing test \& roll provides a transparent decision point and reduced operational complexity without significant loss of profit. 

Section \ref{sec:asymmetric} extends the analysis to situations with different priors on treatments, and provides an efficient numeric approach to computing optimal sample sizes. This allows us to rationalize the common practice of using unequally-sized treatment groups when the two treatments are believed \emph{a priori} to produce different responses, e.g., a test comparing media exposure to no exposure or a test comparing two different prices. 

To illustrate how test \& roll experiments should be designed in practice, Section \ref{sec:applications} provides three empirical applications: website design, online display advertising, and catalog marketing. For each application, we estimate priors based on previous similar experiments. These applications show the wide range of test designs that result from different priors and show that the ``one-size-fits-all" approach favored by null hypothesis testing does not maximize profit. We conclude in Section \ref{sec:discussion} with a discussion of potential extensions of the test \& roll framework and implications for A/B testers. Full statements of propositions and proofs appear in the Appendix.

\section{The Test \& Roll Decision Problem}
\label{sec:model}

A test \& roll with a population of $N$ customers has two stages: a test stage and a roll stage. In the test stage, a random sample of $n_1$ customers are exposed to treatment 1 and a random non-overlapping sample of $n_2$ customers are exposed to treatment 2, with $n_1+n_2 < N$. In the roll stage, all remaining $N-n_1-n_2$ customers receive either treatment 1 or treatment 2 based on a decision rule that incorporates the data observed in the test stage. The marketer's goal is to maximize the cumulative profit earned in both stages. 

Assuming the profit for each customer receiving treatment $j$ is an independent random variable $Y_j$ that follows a distribution with parameters $\theta_j$, the expected profit earned during the test phase is:  
\begin{equation}
E_{Y_1,Y_2}[\Pi_{\textnormal{T}}|\theta_1, \theta_2] = n_1 E[Y_1|\theta_1] + n_2 E[Y_2|\theta_2]
\label{eq:exp_profit_test_conditional}
\end{equation}
$Y_j$ is the profit net of any costs related to the treatments, e.g., media costs or discounts. In website and email tests, for example, the cost of both treatments is the same and can be ignored. 

Denote the vector of observed profit from customers exposed to treatment $j$ in the test as $y_j = y_{j,1}, \ldots, y_{j,n_j}$. Once $y_1$ and $y_2$ are observed, the analyst chooses a treatment to deploy with the remaining $N - n_1 - n_2$ customers. Let $\delta(y_1, y_2)$ be the decision rule which takes the value 1 for the decision to deploy treatment 1 and 0 for treatment 2. The optimal decision rule is to select the treatment with the highest posterior predictive mean $E[Y_j|y_j]$ \citep{DeGroot1970}.

Depending on the decision rule, the expected profit in the roll stage is:
\begin{equation}
E_{Y_1,Y_2}[\Pi_{\textnormal{D}}|\theta_1, \theta_2] = (N-n_1-n_2) E_{Y_1,Y_2}[\delta(y_1,y_2)Y_1+(1-\delta(y_1,y_2))Y_2|\theta_1,\theta_2]
\label{eq:exp_profit_deploy_conditional}
\end{equation}

Increasing $n_1$ and $n_2$ provides more observations about the profitability of each treatment, and thus has the potential to yield more correct decisions in the roll stage. Simultaneously, increasing $n_1$ and $n_2$ decreases the population remaining in the roll stage and increases the test population, some of which is exposed to the lesser-performing treatment. Thus, the test \& roll framework sets up an explicit trade-off between learning during the test phase and earning during the roll phase. This trade-off is important when the total population size $N$ is limited.\footnote{Treatments may be defined as a single exposure (e.g., an email) or a series of exposures (e.g., a digital media campaign).} Well-defined, limited populations are common in marketing: in direct marketing $N$ is the size of the customer list \citep{bitran1996,BonfrerDreze2009}; in paid media, $N$ is often determined by a finite budget; in website or app tests, $N$ reflects the expected traffic for some period after the test.

The parameters $\theta_1$ and $\theta_2$ are unknown prior to the test (hence the need for the test). By assuming a prior distribution over these parameters, we obtain the \emph{a priori} expected profit of the A/B test:
\begin{equation}
    E_{\theta_1,\theta_2}[E_{Y_1,Y_2}[\Pi_{\textnormal{T}}|\theta_1, \theta_2]+E_{Y_1,Y_2}[\Pi_{\textnormal{D}}|\theta_1, \theta_2]]
\end{equation}

Designing the test entails selecting the sample sizes $n_1$ and $n_2$ that maximize the total expected profit:
\begin{equation}
(n_1^*, n_2^*) = \underset{n_1, n_2}{\mathrm{argmax}}~ E_{\theta_1,\theta_2}[E_{Y_1,Y_2}[\Pi_{\textnormal{T}}|\theta_1, \theta_2]+E_{Y_1,Y_2}[\Pi_{\textnormal{D}}|\theta_1, \theta_2]]
\label{eq:sample_size_prob}
\end{equation}

Thus a profit-maximizing test \& roll runs a test with the sample size in \eqref{eq:sample_size_prob} and deploys one treatment based on the decision rule $\delta$.

Both our approach and the hypothesis testing approach described in equations \eqref{eq:null_hypothesis_test} and \eqref{eq:nht_sample_size} are decision-theoretic but differ in three aspects: (1) We define the decision as whether to deploy treatment 1 or treatment 2, instead of deciding whether to reject the null hypothesis; (2) The objective in hypothesis testing is to maximize statistical power while controlling type-I error, while we focus on maximizing profits; (3) Hypothesis testing uses a 0/1 loss function, and so every incorrect decision has the same cost, while our approach uses the actual opportunity cost as the loss, including the cost of the test.

Similar two-stage decision problems have appeared in the literature. \citet{chick2001} analyze a two-stage decision problem where the cost of the test is a fixed multiple of the sample sizes, rather than actual opportunity cost as we have here. In studying multi-armed bandits \citet{schwartzBradlowFader2016WP} and \citet{misra2019} use a test \& roll as a benchmark, but they do not optimize the sample size. The closest work comes from the clinical trials literature, where \citet{cheng2003} defines the same test \& roll problem with a finite ``patient horizon'' and approximates the optimal sample size for Bernoulli responses with beta priors. \citet{stallard2017} extends \citet{cheng2003} to exponential family responses with conjugate exponential family priors. As a result, they also need to use approximations to compute the optimal sample size. In this paper, we focus on Normal response distributions with Normal priors, which allows us to provide an exact closed-form for the optimal sample size as well as exact expected profit and regret, which we show next.  

\section{Test \& Roll with Symmetric Normal Priors}
\label{sec:normal}

To derive a profit-maximizing sample size formula, we assume $Y_1 \sim \mathcal{N}(m_1, s^2)$ and $Y_2 \sim \mathcal{N}(m_2, s^2)$ with identical priors $m_1, m_2 \sim \mathcal{N}(\mu, \sigma^2)$. The variance of the response, $s^2$ is known; in practice it can be estimated from previously observed responses.\footnote{The assumption that $s_1$ and $s_2$ are known could easily be relaxed by putting priors on them, but this is not necessary for deriving key insights.} The hyper-parameters $\mu$ and $\sigma$ represent expectations for how the two treatments may perform, which can be be informed by previous similar marketing campaigns (as illustrated in Section \ref{sec:applications}). 

The symmetric priors imply that neither treatment is \emph{a priori} likely to perform better, but they do not imply that $m_1 = m_2$. The implied prior on the treatment effect $m_1 - m_2$ is $\mathcal{N}(0, 2\sigma^2)$ and the absolute difference between treatments $|m_1 - m_2|$ is distributed half-normal with mean $\frac{\sqrt{2}}{\sqrt{\pi}} \sigma$. Thus $\sigma$ is related to the \emph{a priori} expectation about the potential difference in treatment effects (as well as the uncertainty).

The expected profit in the test stage for this model is: 
\begin{equation} 
E[\Pi_{\textnormal{T}}] = (n_1 + n_2) \mu
\label{eq:NN_exp_profit_test}
\end{equation}

The expected profit in the roll stage depends on the decision rule $\delta(y_1, y_2$). The profit-maximizing decision rule is to choose the treatment with the greater expected posterior mean response: 
\begin{equation}
\delta(y_1, y_2)  = I\left( \left( \frac{1}{\sigma^2} + \frac{n_1}{s^2} \right)^{-1} \left( \frac{\mu}{\sigma^2} + \frac{\overline{y}_1}{s^2} \right) > \left( \frac{1}{\sigma^2} + \frac{n_2}{s^2} \right)^{-1} \left( \frac{\mu}{\sigma^2} + \frac{\overline{y}_2}{s^2} \right) \right)
\label{eq:decision_rule_post_mean}
\end{equation}
where $\overline{y}_j$ is the average response observed for treatment $j$ and $I(\cdot)$ is the indicator function.
Since the priors are symmetric, this reduces to $\delta(y_1, y_2) = I\left( \overline{y}_1 > \overline{y}_2 \right)$ if $n_1 = n_2$, i.e., the highly-intuitive ``pick the winner" in the test.

Proposition \ref{prop:exp-prof} shows that the decision rule in \eqref{eq:decision_rule_post_mean} yields an expected roll-stage profit of: 
\begin{equation}
E[\Pi_{\textnormal{D}}] = (N - n_1 - n_2) \left[ \mu + \frac{\sqrt{2} \sigma^2}{\sqrt{\pi} \sqrt{2\sigma^2 + \frac{n_1 + n_2}{n_1 n_2}s^2} }\right]
\label{eq:NN_exp_profit_deploy_symmetric}
\end{equation}
The second addend in the square brackets is the expected incremental profit per customer earned by (usually) deploying the better treatment relative to choosing randomly with expected profit of $\mu$. Unsurprisingly, the incremental gain per customer from the test is increasing in the sample sizes $n_1$ and $n_2$. However, as $(n_1 + n_2)$ increases, the number of customers for whom this higher profit is earned is smaller.  The incremental gain decreases with the noise in the data, $s$, as expected. The \emph{a priori} range of effect sizes is defined by $\sigma$. Higher \emph{a priori} uncertainty about the mean response increases the option value from the experiment and so the incremental gain increases with $\sigma$. 

To find the optimal sample size, the sum of the test profit in (\ref{eq:NN_exp_profit_test}) and the deployment profit in (\ref{eq:NN_exp_profit_deploy_symmetric}) can be maximized over $n_1$ and $n_2$ resulting in optimal sample sizes (Proposition \ref{prop:sample-size}):
\begin{equation}
n^* = n_1^* = n_2^* 
= \sqrt{\frac{N}{4}\left( \frac{s}{\sigma} \right)^2 + \left( \frac{3}{4} \left( \frac{s}{\sigma} \right)^2  \right)^2 } -  \frac{3}{4} \left(\frac{s}{\sigma} \right)^2
\label{eq:NN_sample_size}
\end{equation}
Since $\frac{N}{4}\left( \frac{s}{\sigma} \right)^2 + \left( \frac{3}{4} \left( \frac{s}{\sigma} \right)^2  \right)^2 \le \left(\sqrt{N} \frac{s}{2 \sigma}+ \frac{3}{4} \left( \frac{s}{\sigma} \right)^2\right)^2$, \eqref{eq:NN_sample_size} implies that $n_1^* = n_2^* \le \sqrt{N} \frac{s}{2 \sigma}$. The profit-maximizing sample size is always less than the population size $N$ and grows sub-linearly with the standard deviation of the response $s$. By contrast, the recommended sample size for a hypothesis test in \eqref{eq:nht_sample_size} grows linearly with the variance $s^2$ without regard to $N$. This explains why, for noisy responses, hypothesis tests frequently require sample sizes that are larger than the available population \citep{LewisRao2015}. 

Notably, the profit-maximizing sample size decreases with $\sigma$. Large $\sigma$ implies: (1) a larger expected difference between treatments and, (2) a lower error rate for a given sample size (see \eqref{eq:NN_error_rate} below), while (3) the opportunity cost remains the same. 

\subsection{Error rate}

Test \& roll does not require the planner to specify an acceptable level of error; the error rate follows from optimally trading off the opportunity cost of the test against the expected loss in profit due to deployment errors. However, practitioners may want to know the expected error rate. Conditional on $m_1$ and $m_2$, the likelihood of deploying treatment 1 when treatment 2 has a better mean response is: 
\begin{equation}
Pr\left(\delta(y_1, y_2) = 1 | m_1, m_2\right) =  1-\Phi\left(\frac{m_2-m_1}{s \sqrt{\frac{1}{n_1}+\frac{1}{n_2}}}\right)
\label{eq:NN_error_rate_conditional}
\end{equation}
From (\ref{eq:NN_error_rate_conditional}), we see that when the difference in treatments $m_2 - m_1$ is positive and large, the error rate is lower, i.e., the better treatment will be deployed. When $m_2 - m_1$ is smaller, it is more likely that the wrong treatment will be deployed, but this is less consequential for profit. 

Integrating (\ref{eq:NN_error_rate_conditional}) over the priors on $m_1$ and $m_2$, the expected error rate is (Corollary \ref{cor:error-rate}): 
\begin{align}
E[Pr(\delta(y_1, y_2) = 1 & | m_1 < m_2)] = E[Pr(\delta(y_1, y_2) = 0 | m_1 > m_2)] = \nonumber \\ &\frac{1}{4} - \frac{1}{2 \pi}\arctan\left(\frac{\sqrt{2}\sigma}{s} \sqrt{\frac{n_1 n_2}{n_1+n_2}}\right)
\label{eq:NN_error_rate}
\end{align}
As expected, the error rate decreases with the test sizes $n_1$ and $n_2$, increases with $s$, and decreases with $\sigma$.

\subsection{Regret}
\label{sec:regret}

To provide an upper bound on the total expected profit, we compute the expected profit with perfect information (PI). If an omniscient marketer were able to deploy the treatment with higher expected profit to all $N$ customers without testing, the expected profit would be (Proposition \ref{prop:regret}, part 1): 
\begin{equation}
E[\Pi|\textnormal{PI}] = \left(\mu + \frac{\sigma}{\sqrt{\pi}}\right) N
\label{eq:exp_profit_perfect_information}
\end{equation}
The expected profit of any algorithm for choosing which treatment to deploy to each customer will be between the expected value of choosing randomly, which is $\mu N$ and the expected value of perfect information in (\ref{eq:exp_profit_perfect_information}). The expected profit with perfect information scales with the variance of the prior $\sigma$; the more potential difference there is between treatments, the more opportunity there is to improve profits by choosing the better treatment. 

The expected regret of the profit-maximizing test \& roll experiment is (Proposition \ref{prop:regret}, part 2):
\begin{align}\label{eq:exp-regret}
    E[\Pi|PI] - E[\Pi^*_D+\Pi^*_T] = & N \frac{\sigma}{\sqrt{\pi}}\left(1-\frac{\sigma}{\sqrt{\sigma^2+\frac{s^2}{n^*}}}\right)+\frac{2n^*\sigma^2}{\sqrt{\pi}\sqrt{\sigma^2+\frac{s^2}{n^*}}} \nonumber \\
    & \le \frac{3s\sqrt{N}}{\sqrt{\pi}}= O(\sqrt{N})
\end{align}
When populations are larger, the regret per customer decreases, hence marketers with larger populations have a greater opportunity to improve profits on a per-customer basis with a profit-maximizing test. We also see that the regret has an upper bound that does not depend on $\sigma$, implying that the potential regret is limited. Marketers can use the new closed-form formulas in \eqref{eq:NN_error_rate} and  \eqref{eq:exp-regret} to easily assess the potential value of running a test \& roll.

To gain further insight into these results, we look at the expected relative regret of the profit-maximizing test \& roll with respect to the expected profit from perfect information:
\begin{equation} 
\frac{E[\Pi|PI] - E[\Pi^*_D+\Pi^*_T]}{E[\Pi|PI]}
\end{equation}
 As corollary \ref{cor:max-regret} proves, the relative regret reaches a maximum for an intermediate finite value of $\sigma$. When $\sigma$ is very small, there is not much to gain from having perfect information, and hence the relative regret will be small, while when $\sigma$ is large, the test stage will pick the best performing treatment with a very high probability, also yielding low-regret. Only when $\sigma$ is intermediate is there some chance of substantial loss from using a simple method such as test \& roll, but even in this case the potential loss is limited.

In contrast, using the sub-optimal sample size recommended for a hypothesis test produces substantially greater regret. Assuming that the better performing treatment will be deployed after the test regardless of significance,\footnote{As noted in the introduction, it is not clear what should be done if the null hypothesis cannot be rejected.} we can substitute the value of $n_{HT}$ from \eqref{eq:nht_sample_size} for $n^*$ in \eqref{eq:exp-regret}. The regret from using the larger sample size is (Proposition \ref{prop:regret}, part 3):
\begin{equation}
E[\Pi|PI] - E[\Pi_D+\Pi_T|HT] \ge N \frac{\sigma}{\sqrt{\pi}}\frac{d^2}{4(z_{(1-\alpha)/2}+z_{\beta})^2\sigma^2+2d^2)} = \Omega(N)
\end{equation} 
implying that hypothesis testing has a lower bound expected regret of $\Omega(N)$, substantially larger than the profit-maximizing sample size with regret $O(\sqrt{N})$ as $N$ becomes large.\footnote{$\Omega$ denotes a lower asymptotic bound while $O$ denotes an upper asymptotic bound.} Proposition \ref{prop:regret} also shows that this bound holds when a finite-population correction is included in the sample size formula. 

We can also compare a test \& roll with profit-maximizing sample size to a multi-armed bandit where allocation to treatments is determined probabilistically for each customer based on previous responses. \cite{agrawal2013} show that the expected regret of a multi-armed bandit with Thompson sampling \citep{Thompson1933}\footnote{Thompson sampling uses a decision rule based on the posterior similar to \eqref{eq:decision_rule_post_mean}, but continuously updates the posterior and makes a probabilistic decision for each customer proportional to the probability that each treatment is best.} and Normal rewards also has regret $O(\sqrt{N})$, and that this bound is tight. Thus, the regret of a test \& roll with the profit-maximizing sample size has the same order as a multi-armed bandit with Thompson sampling. Because the actual regret depends on parameter values and can not be computed in closed-form for Thompson sampling with 
Normal response, we compare them in specific application settings in Section \ref{sec:applications}, and show that test \& roll achieves comparable average regret in several realistic cases. 

The normal model developed in this section can also be used in situations where the response is Bernoulli (e.g., clicks, purchase incidence) using the standard approximation $s=\mu(1-\mu)$ and has a convenient closed-form solution. Alternatively, Appendix \ref{sec:beta-binomial} develops a beta-binomial version where sample size must be computed numerically. Figure \ref{fig:nn_v_bb} compares exact sample sizes from the beta-binomial with the normal approximation and shows that the normal approximation provides accurate sample sizes when $\mu$ is between 0.05 and 0.95; for smaller or larger $\mu$ the sample size computed using the normal approximation is too small and we suggest using the beta-binomial formulation.

\section{Test \& Roll with Asymmetric Normal Priors}
\label{sec:asymmetric}

The analysis thus far focused on cases with a common prior for both treatments. However, there are many situations where the priors might be different, e.g., comparing a marketing communication against a holdout control. 

Relaxing the assumptions from the previous section, assume $Y_1 \sim \mathcal{N}(m_1, s_1^2)$ and $Y_2 \sim \mathcal{N}(m_2, s_2^2)$ with priors $m_1 \sim \mathcal{N}(\mu_1, \sigma_1^2)$ and $m_2 \sim \mathcal{N}(\mu_2, \sigma_2^2)$ that represent the information about the treatments available prior to the test.

Under these priors, the \emph{a priori} expected profit in the test stage is: 
\begin{align}
E[\Pi_\textnormal{T}] = \mu_1n_1 + \mu_2n_2
\label{eq:NNasym_exp_profit_test}
\end{align}

Decision rule \eqref{eq:decision_rule_post_mean} is still optimal in this case, but does not imply selecting the treatment that performs better in the test anymore; the prior information now also affects the decision. Using the decision rule in \eqref{eq:decision_rule_post_mean}, the \emph{a priori} expected profit in the roll stage is (Proposition \ref{prop:exp-prof}):
\begin{equation}
\begin{gathered}
E[\Pi_{\textnormal{D}}] = (N-n_1-n_2) \left[\mu_1 + e\Phi\left(\frac{e}{v}\right) + v\phi\left(\frac{e}{v}\right)\right]\\
\textnormal{where } e = (\mu_2 - \mu_1) \textnormal{ and } v =  \sqrt{\frac{\sigma_1^4}{\sigma_1^2+s_1^2/n_1}+\frac{\sigma_2^4}{\sigma_2^2+s_2^2/n_2}} \\
\end{gathered}
\label{eq:NNasym_exp_profit_deploy}
\end{equation}
 


The expected total profit $E[\Pi] = E[\Pi_\textnormal{T}] + E[\Pi_\textnormal{D}]$ can be maximized over $n_1$ and $n_2$ to find the optimal sample size. The optimal sample sizes can not be solved for analytically, but the function can be easily optimized numerically.\footnote{Functions for finding optimal sample sizes for asymmetric normal priors or beta-binomial priors will be included in an R package to be published to CRAN.} 

\subsection{Incumbent/Challenger Tests}
One example of an asymmetric test \& roll experiment arises when the experimenter has more past experience with treatment $1$ vs. treatment $2$, implying that $\sigma_1<\sigma_2$. We dub this an ``incumbent/challenger" test.  For example, an incumbent can be an ad copy or page design that follows the traditional firm branding strategy, while a challenger uses a new creative approach. When $\sigma_1 < \sigma_2$, the optimal sample size will be larger for the challenger treatment, to gain more information about the challenger in the test. A proof of this alongside sample size formulas can be found in Appendix \ref{sec:asymmetric_test_details}.  

\subsection{Pricing Tests}
A second common case for asymmetric test plans are pricing experiments. Because companies face uncertainty about which prices are optimal, they often experiment with multiple prices. Different prices, however, influence two important factors. First is the amount of people who will purchase the product; higher prices will elicit fewer purchases. Second is the profit per person; higher prices yield higher profits conditional on purchase. Thus, setting different prices effectively changes the priors on the mean profit per customer, which implies different optimal sample sizes for the two price levels. 

An example application that fits our framework is as following. Suppose the firm would like to pick between two known prices, $p_1$ and $p_2$, and that demand from customer $i$ presented with price $j$ is $d_{ij}=a-m\cdot p_j+\varepsilon_{ij}$. In this model, demand is linear in price, $a$ is the  willingness to pay for the product, $m$ is the uncertain price sensitivity with a prior distribution $\mathcal{N}(\mu, \sigma^2)$, and $\varepsilon_{ij} \sim \mathcal{N}(0,s^2)$. The profit from a customer $i$ presented with price $j$ will be $y_{ij}=p_j d_{ij}$. 
This model translates directly to a Normal-Normal model with asymmetric priors, when we denote $\mu_j=p_j(a-\mu p_j)$, $\sigma_j=p_j^2 \sigma$ and $s_j=p_j s$. Consequently, the profit and optimal sample size formulas derived for the asymmetric case can be applied directly to pricing experiments, and will recommend different sample sizes depending on the levels of prices being tested. A marketer could further optimize the test prices, $p_1$ and $p_2$. More distant prices help to identify $m$, but increase the opportunity cost of the test.

A more comprehensive approach to this problem was taken by \citet{misra2019}, where the goal is not to test specific prices, but rather to learn the demand curve while maximizing profits. The test \& roll setup can be adapted to solve a similar problem, but the solution will require a numerical approach for calculating sample sizes and optimal prices.

\section{Applications}
\label{sec:applications}

Designing a profit-maximizing test \& roll requires priors on the distribution of the mean response rate (profit) of the treatments. This section illustrates how to estimate these priors using data on past marketing interventions.\footnote{This is similar to using a pre-test to inform priors for conjoint design \citep{AroraHuber2001}.} We then use the estimated priors to provide optimal test plans for three different marketing contexts and compare them to hypothesis testing and multi-armed bandits using Thompson sampling, based on expected profit and regret. The first two applications use symmetric priors, while the third presents a situation where asymmetric priors are appropriate.

\subsection{Website testing}

To set priors based on past data, we analyze 2,101 website tests from \cite{berman2018} which were conducted across a wide variety of websites. For each treatment arm in each experiment we observe the click rate, $\bar{y}$ and sample size $n$.\footnote{While it would be ideal to observe sales and revenue for each visitor, this is not always possible. As a proxy, we assume for this example that profit is proportional to the number of clicks.} Fitting a hierarchical model to this data, we estimate that the mean responses (click rates) are distributed $\mathcal{N}(0.68, 0.03)$ across treatment arms. (Appendix \ref{sec:website-test-data} details the data and estimation.) 

To plan a new test, we assume this as a symmetric prior on mean response ($m_1$ and $m_2$). Assuming symmetric priors is reasonable as there is typically no prior information that one version of a web page will perform better than the other. The implied prior on the treatment effect is shown in Figure \ref{fig:website_prior} and has mean $E[|m_1 - m_2|] =  0.023$.
\begin{figure}
\centering
\includegraphics[page=1, width=0.5\textwidth]{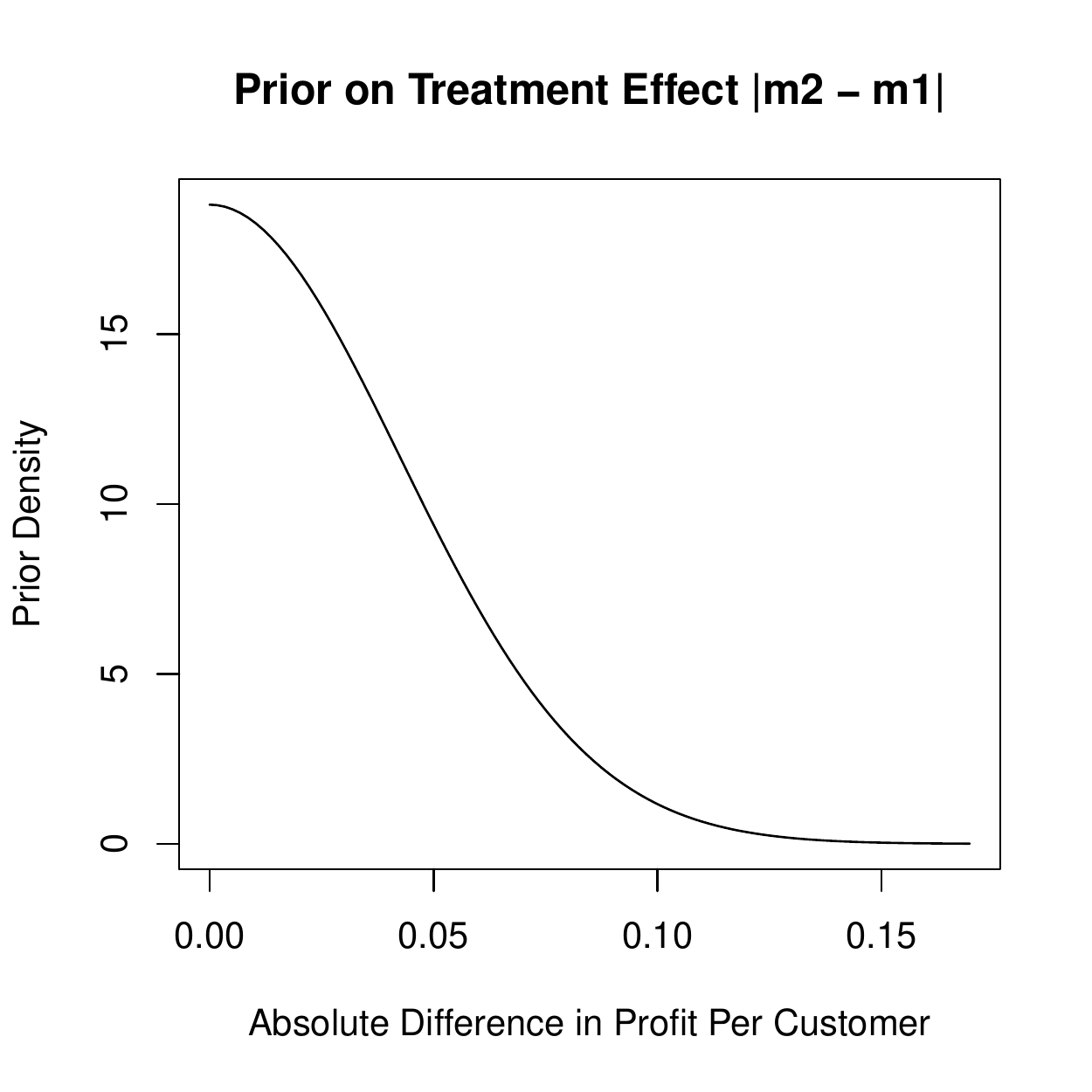}
\caption{Implied prior on treatment effect for website example.}
\label{fig:website_prior}
\end{figure}
We compute the sample size based on (\ref{eq:NN_sample_size}), using $\sqrt{\widehat{\mu}(1-\widehat{\mu})}$ to approximate $s$. The population size $N$ is set based on the expected number of people who will visit the website over the deployment period.  As an example, with $N=100,000$, the optimal test size is $n_1^* = n_2^* = 2,284$ in each test group. The expected number of clicks is 3,106 in the test and 66,430 more when the better-performing treatment is deployed, for a total of 69,536 conversions.  Following (\ref{eq:NN_error_rate}), this test will deploy the worse-performing web page 10.0\% of the time, and this represents the optimal trade-off with the opportunity cost of the test.  The profit-maximizing test \& roll has expected regret of 0.22\% relative to expected profit with perfect information\footnote{To facilitate comparisons across applications, we report the regret relative to the expected value of perfect information, i.e. the ratio of \eqref{eq:exp-regret} to \eqref{eq:exp_profit_perfect_information}.} and achieves 90.7\% of the potential gains over choosing randomly. 

Figure \ref{fig:website_n} shows the overall expected conversion rate (in the test and roll phases combined) as a function of the test size. Since small tests rapidly improve the deployment decision and increase profits, practitioners should be encouraged to run small tests and act on them. Tests that are larger than optimal decrease the error rate marginally (Figure \ref{fig:website_n}b), but erode overall expected profit (Figure \ref{fig:website_n}a). Notice that the slope of expected profit falls more swiftly when sample sizes are sub-optimal; a test is that is too large is preferable to one that is too small by the same amount. 
\begin{figure}[ht]
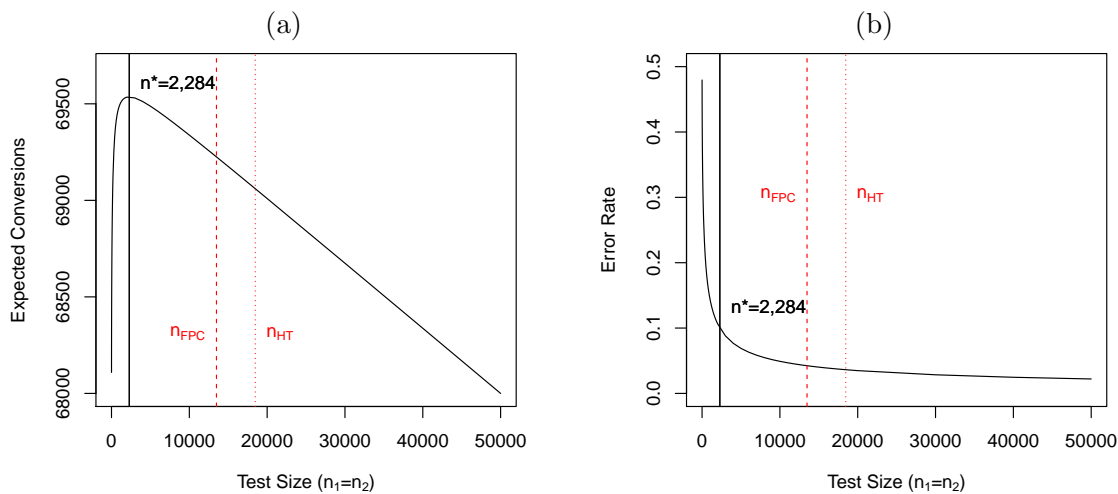

\centering
\begin{tabular}{cc}
(a) & (b)\\[-7ex]
\includegraphics[page=2, width=0.45\textwidth]{Images/website.pdf} &
\includegraphics[page=3, width=0.45\textwidth]{Images/website.pdf}\\
\end{tabular}
\caption{Total expected profit (a) and error rate (b) as a function of test size for website test example. ($N=100,000$, $\mu=0.68$, $\sigma=0.03$, $s=0.466$).}
\label{fig:website_n}
\end{figure}

Computing the profit-maximizing test size formula in (\ref{eq:NN_sample_size}) requires the user to specify a full prior distribution on the mean response for each arm, which requires the test designer to think about how the treatments will perform and can be informed by past data. In contrast, finding the recommended sample size for a hypothesis test following \eqref{eq:nht_sample_size} requires selecting the minimum effect size to detect ($d$) and acceptable levels of type I and type II errors ($\alpha$, $\beta$). This can be challenging; many test planners have difficulty defining type I and type II error, let alone estimating the costs of those two errors to set desired levels of $\alpha$ and $\beta$. There are numerous blog posts devoted to explaining how to apply hypothesis testing to A/B tests \citep{Gershoff2017, Wortham2018}. In most applications, standard values of $\alpha=0.05$ and $\beta=0.8$ are used despite the fact that type I error is often inconsequential. 

To estimate a typical recommended sample size for a hypothesis test for this example, we use standard values for $\alpha$ and $\beta$ and set $d = 0.68 \times 0.02 = 0.0136$, i.e., a 2\% lift. This value for $d$ is the 25.1 \%-tile of the prior distribution of treatment effects implied by $\mu$ and $\sigma$. The resulting recommended sample size for a hypothesis test is 18,468 in each group (or 13,487 with a finite population correction), an order of magnitude larger than the profit-maximizing test size. This larger sample size is set to control type I and type II error tightly irrespective of the opportunity cost of the test, resulting in much larger sample sizes than are necessary to maximize expected profit. In this application, the oversized test reduces the remaining population that can receive the better treatment and results in 476 fewer expected conversions (see Figure \ref{fig:website_n} and Table \ref{tab:website_test}). 

\linespread{1.1}
\begin{table}[ht]
\caption{Comparison of test plans for website test example.
($N=100,000$, $\mu=0.68$, $\sigma=0.03$, $s=0.466$).}
\label{tab:website_test}
\begin{center}
\begin{tabular}{cccccccc}
\hline
& & & \multicolumn{3}{c}{Expected Conversions} & Exp. & Roll\\
\cline{4-6}
& $n_1$ & $n_2$ & Test & Roll & Overall & Regret & Error  \\
\hline
No Test (Random) & - & - & - & - & 68,000 & 2.43\% & 50.0\%\\
Std. Hyp. Test & 18,468 & 18,468 & 25,116 & 43,944 & 69,060 & 0.91\% & 3.6\%\\ 
Hyp. Test FPC* & 13,487 & 13,487 & 18,342 & 50,883 & 69,225 & 0.67\% & 4.2\% \\ 
Test \& Roll & 2,284 & 2,284 & 3,106 & 66,430 & 69,536 & 0.22\% & 10.0\% \\ 
Thompson Sampling & - & - & - & - & 69,672 & 0.03\% & - \\
Perfect Information & - & - & - & - & 69,693 & 0\% & - \\
\hline
\end{tabular}
\end{center}
* Hypothesis test with finite population correction
\end{table}
\linespread{1.5}

Figures \ref{fig:website_N_sigma}(a)-(c) show the sensitivity of the profit-maximizing sample size to $N$, $\sigma$ and $s$. Panel (a) shows how the sample size scales with the population $N$, allowing marketers with lower-traffic websites or pages to appropriately size website A/B tests. 
\begin{figure}
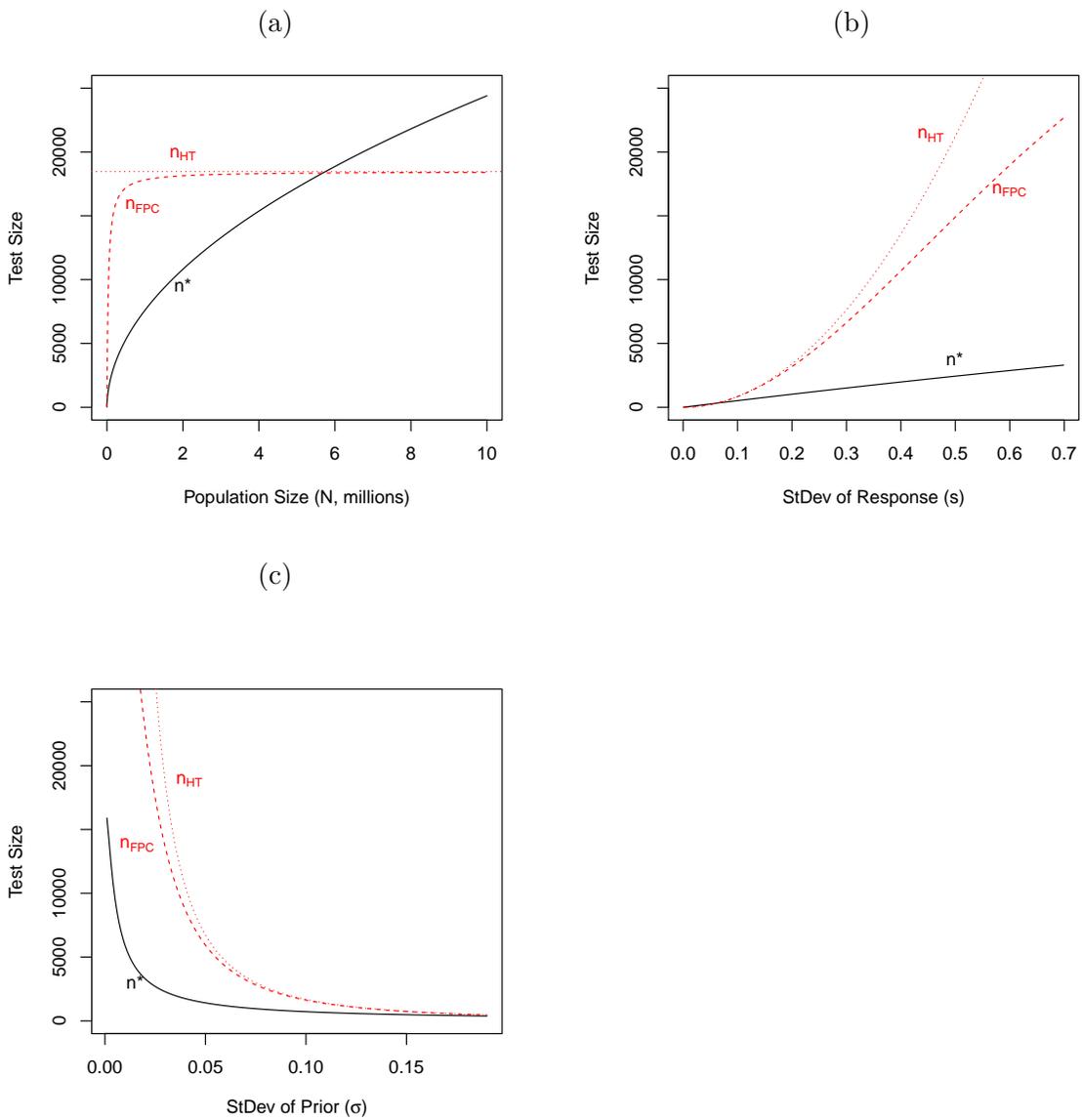

\centering
\begin{tabular}{cc}
(a) & (b) \\[-5ex]
\includegraphics[page=4, width=0.45\textwidth]{Images/website.pdf} &
\includegraphics[page=6, width=0.45\textwidth]{Images/website.pdf}\\
(c) \\
\includegraphics[page=5, width=0.45\textwidth]{Images/website.pdf}\\
\end{tabular}
\caption{Optimal sample size ($n^*$) for website test as a function of (a) Population Size $N$, (b) standard deviation of response $s$ and (c) $\sigma$ as compared to the sample size for a hypothesis test ($n_{HT}$) and a hypothesis test with finite population correction ($n_{FPC}$). Other parameters fixed at $N=100,000$, $\mu=0.68$, $\sigma=0.03$, $s=0.466$.}
\label{fig:website_N_sigma}
\end{figure}
Panel (b) shows how sample size grows linearly with the response noise $s$, unlike the recommended sample size for a null hypothesis test which increases with $s^2$. Panel (c) shows that when $\sigma$ is larger, smaller test sizes are sufficient to detect treatments that on average perform substantially better.\footnote{The values of $n_{HT}$ and $n_{FPC}$ shown in Panel (c) assume $d$ is set at the 25\%-tile of the prior of the absolute treatment effect.}

To compare test \& roll to a multi-armed bandit, Table \ref{tab:website_test} shows the expected conversions and relative regret for multi-armed bandit with Thompson sampling where units are allocated to treatments sequentially based on the posterior predictive probabilities that each treatment is best \citep{Thompson1933}. See Appendix \ref{sec:thompson_sampling_details} for implementation details. The dynamic Thompson sampling algorithm produces 136 more conversions than a test \& roll with optimal sample size. Both methods use a decision rule based on the same posterior, but the multi-armed bandit has more flexibility to recover from early observations that favor the wrong treatment. However, the difference is small: Thompson sampling achieves expected relative regret of 0.03\%, while test \& roll achieves 0.22\%. For this example, profit-maximizing test \& roll becomes an attractive option, once the operational complexity of integrating a dynamic algorithm into the website is considered. 

To provide guidance as to when a test \& roll and Thompson sampling are most comparable, we compute relative regret for both algorithms, under a variety of conditions. For each condition, we simulated $R=10,000$ sets of potential outcomes on which to compare algorithms. The resulting densities of relative regret are plotted in Figure \ref{fig:regret_sensitivity}. In general, an optimized test \& roll has a wider distribution of regret with a longer right tail due to occasional deployment errors. Thompson sampling can recover from these errors and so achieves a tighter distribution of regret. The difference between algorithms is more pronounced when there are a greater number of treatment arms, where dynamic allocation provides a stronger advantage. 
\begin{figure}
\centering
\includegraphics[page=1, width=0.42\textwidth]{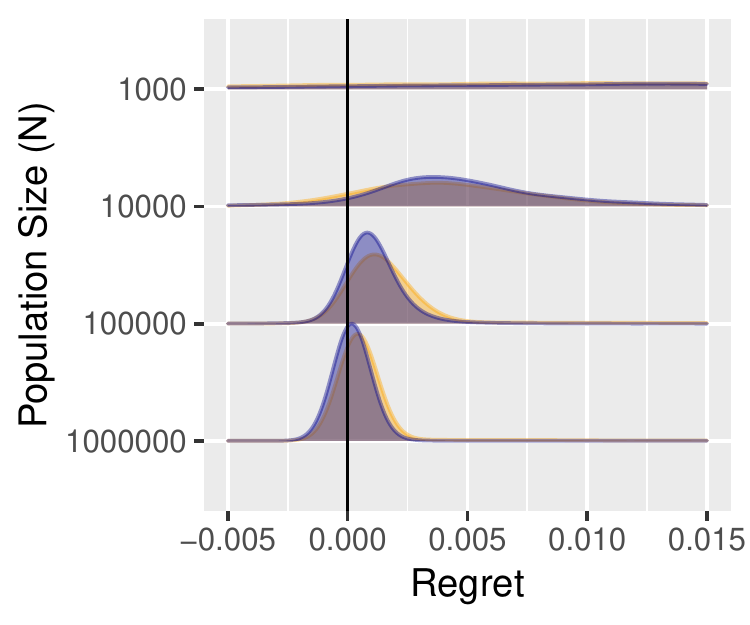}
\includegraphics[page=2, width=0.42\textwidth]{Images/website_sensitivity_ridges.pdf}\\
\includegraphics[page=3, width=0.42\textwidth]{Images/website_sensitivity_ridges.pdf}
\includegraphics[page=4, width=0.42\textwidth]{Images/website_sensitivity_ridges.pdf}\\
\includegraphics[page=5, width=0.42\textwidth]{Images/website_sensitivity_ridges.pdf}
\includegraphics[page=6, width=0.42\textwidth]{Images/website_sensitivity_ridges.pdf}\\
\caption{\linespread{1.1} \small The \emph{a priori} relative regret of Thompson sampling (dark blue) and profit-maximizing test \& roll (light orange) are remarkably similar under wide range of conditions. Parameters not varied are fixed at the website example $N=100,000$, $K=2$, $\mu=0.63$, $s=0.466$, $\sigma=0.03$. Density plots are computed from $R=10,000$ draws of potential outcomes. For $K>2$ treatments, we computed the Test \& Roll profit numerically for all possible sample sizes to find the optimum. Sometimes, the algorithm achieves profit higher than the ex-ante expected value of perfect information resulting in negative relative regret.}
\label{fig:regret_sensitivity}
\end{figure}
As discussed in Section \ref{sec:normal}, the difference is also more pronounced when there is a moderate expected difference between treatments (governed by $\sigma$), which leads to a greater risk deployment error for test \& roll. When $\sigma$ is small, there is little to be gained by selecting the right treatment. When $\sigma$ is large, the difference between treatments is large and both algorithms will detect the better treatment. Thompson sampling performs remarkably well over a wide range of conditions, usually producing relative regret less than 1\%. However, even in the worst conditions we test, the test \& roll has expected relative regret that is close to Thompson sampling, making it a reasonable alternative when there are high costs of implementing a dynamic algorithm or when greater transparency is desired.  Decision makers can compute expected regret for Thompson sampling versus test \& roll for their specific priors to evaluate whether the difference in performance exceeds the additional cost of implementing a dynamic algorithm.

\subsection{Display advertising testing}

As a second example of a profit-maximizing test \& roll, we base priors on online display ad experiments reported by \citet{LewisRao2015}. We focus on 5 experiments reported for ``Advertiser 1''. \citet{LewisRao2015} report the mean and standard deviation of the sales response (\$) in the control group for each experiment ($m_1$ and $s = s_1 = s_2$ in our notation). Applying a hierarchical model to the reported summaries, we estimate $m_1 \sim \mathcal{N}(10.36, 4.40)$ and the standard deviation of response $s$ is 103.77. See Appendix \ref{sec:estimating_priors} for details.

Ideally, we would estimate a similar distribution for the treated group, creating asymmetric priors, but \citet{LewisRao2015} do not report the treatment effects for these experiments. Instead, we assume the profit per customer $m_2$ has the same prior distribution as $m_1$. That is, on average the ads produce a lift that precisely covers the cost.   
\linespread{1.1}
\begin{table}
\footnotesize
\caption{Comparison of test plans for online display example. ($N=1,000,000$, $\mu=10.36$, $\sigma=4.40$, $s=103.77$).}
\label{tab:display_test}
\begin{center}
\begin{tabular}{cccccccc}
\hline
& & & \multicolumn{3}{c}{Expected Sales (\$000)} & & \\
\cline{4-6}
& $n_1$ & $n_2$ & Test & Roll & Overall & Regret & Roll Error  \\
\hline
No Test (Random) & - & - & - & - & 10,360 & 19.32\% & 50.0\% \\
Standard Hyp. Test* & 4,782,433* & 4,782,433* & n/a & n/a & n/a &  n/a & n/a \\
Hyp. Test FPC** & 452,673 & 452,673 & 9,380 & 1,125 & 10,595 & 17.5\% & 1.1\% \\
Test \& Roll & 11,391 & 11,391 & 236 & 12,491 & 12,727 & 0.89\% & 6.9\% \\ 
Thompson Sampling & - & - & - & - & 12,803 & 0.29\% & - \\
Perfect Information & - & - & - & - & 12,840 & 0\% & - \\
\hline
\end{tabular}
\end{center}
* Recommended test size is larger than assumed population.\\
** Hypothesis test with finite population correction as defined in Footnote \eqref{eq:nht_fpc_sample_size}.\\ 
\end{table}
\linespread{1.5}

Assuming a total population size of $N=1,000,000$, the profit-maximizing sample size is $n_1 = n_2 = 11,391$. Even with this small test, the decision of whether or not to advertise to the remainder of the population is incorrect only $6.9\%$ of the time. By contrast, these tests would require a sample size of 4,782,433 in each group for a standard hypothesis test to detect a difference of $d=0.19$ at $\alpha=0.05$ and $\beta=0.80$.\footnote{The difference of 0.19 is approximately the difference between ROI= -100\% and 0\% assuming the ads cost 0.094 per user (the average reported cost across experiments) and the margin on retail sales is 0.5. This sample size is similar to those calculated by \citet{LewisRao2015} in Table III.} As \citet{LewisRao2015} point out, tests of this size are infeasible within the budget of most advertisers and the population available on most ad platforms. Even with a finite population correction, the sample size for a hypothesis test is 452,673, which results in substantially higher regret. A risk-neutral firm can reliably determine whether advertising is more profitable than not and maximize expected profits with far smaller tests.  As can be seen by comparing \eqref{eq:nht_sample_size} and \eqref{eq:NN_sample_size}, the difference in sample size is larger when $s$ large, as it is for the display advertising tests. Even if we cut the prior variance $\sigma$ in half and increase the population to $N=10,000,000$, the profit-maximizing sample size only increases to $n_1 = n_2 = 234,361$, still half that required for a hypothesis test with finite population correction. Test sizes, profits and error rates are summarized in Table \ref{tab:display_test}. 

\subsection{Catalog holdout testing}

Finally, we illustrate how asymmetric priors described in Section \ref{sec:asymmetric} lead to unequal test group sizes. We estimate priors based on 30 catalog holdout tests conducted by a specialty retailer. For each customer in each test, we observe all-channel sales (\$) in the month after the catalog is sent. Appendix \ref{sec:estimating_priors} details how the data is used to estimate the distribution of mean catalog responses for the treated and holdout groups. Figure \ref{fig:catalog_dist} shows the fitted priors for mean revenue per customer, which are $\mathcal{N}(30.06, 13.48)$ for the treated groups and $\mathcal{N}(19.39, 20.97)$ for the holdout groups. That is, we expect the customers who receive the catalog to purchase more. The standard deviation in response within a group is estimated at $s_1 = 87.69$ and $s_2 = 179.36$. 

\begin{figure}[ht]
\centering
\includegraphics[page=1,width=0.45\textwidth]{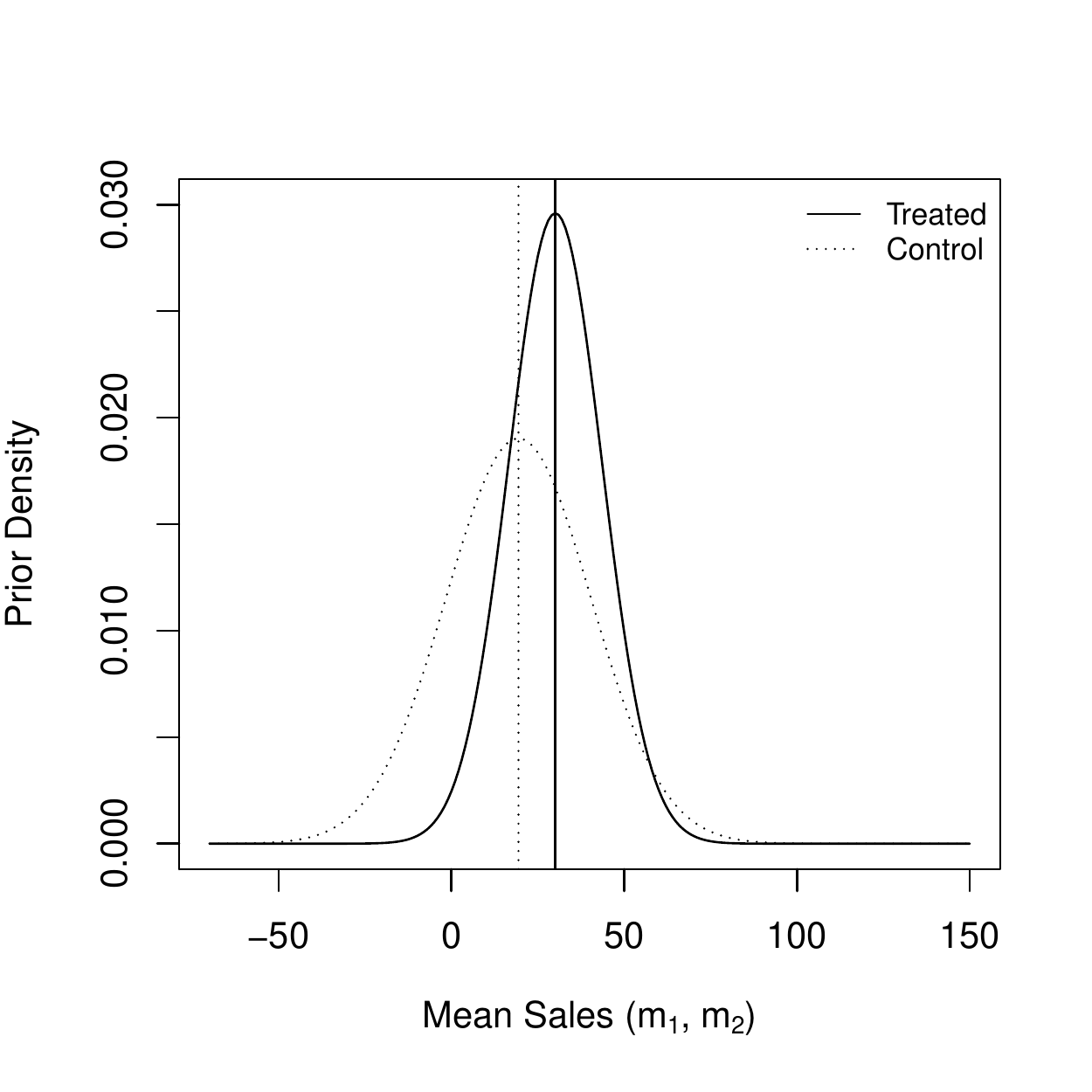}
\includegraphics[page=2,width=0.45\textwidth]{Images/catalog_priors.pdf}
\caption{Fitted distributions for mean response estimated from previous catalog mailings for a specialty retailer (left) and the implied prior on treatment effects (right).}
\label{fig:catalog_dist}
\end{figure}

After accounting for the cost of the media (approx. \$0.80), 23.2\% of catalog campaigns are expected to decrease profit based on the priors in Figure \ref{fig:catalog_dist}. A test \& roll experiment can be used with future campaigns to prevent mailing to the entire list when it is unprofitable. Assuming a population size of $N=100,000$, the profit-maximizing sample sizes are $n_1^*=588$ (control) and $n_2^*=1,884$ (treated). An experiment with these sample sizes achieves expected total sales of \$3,463,250. The recommended sample size for a hypothesis test to detect a 25\% sales lift is 7,822 in the control group and 15,996 in the treated,\footnote{When $\sigma_1 \ne \sigma_2$, then the sample sizes $n_1 = \left(z_{(1-\alpha)/2} + z_\beta \right)^2 \left(\frac{s_1^2 + s_1s_2}{\delta^2}\right)$ and $n_2 = \left(z_{(1-\alpha)/2} + z_\beta \right)^2 \left(\frac{s_1 s_2 + s_2^2}{\delta^2}\right)$ minimize $n_1 + n_2$ while achieving the desired confidence and power. See \citet{LuoGuo2007}.} resulting in a much larger test that achieves a lower expected profit of \$3,287,412. Correcting for finite sampling reduces sample sizes to 6,317 (control) and 12,921 (treated) and improves overall profit slightly. These test plans are summarized in Table \ref{tab:catalog_test}.

\begin{table}
\linespread{1.1}
\footnotesize
\caption{Comparison of test plans for catalog holdout example ($N=100,000$, $\mu_1=30.06$, $\sigma_1=13.48$, $s_1=87.69$, $\mu_2=19.39$, $\sigma_2=20.97$, $s_2=179.36$).}
\label{tab:catalog_test}
\centering
\begin{tabular}{cccccccc}
\hline
& & & \multicolumn{3}{c}{Expected Sales (\$000)} & &  \\
\cline{4-6}
& $n_1$ & $n_2$ & Test & Roll & Overall & Regret & Roll Error  \\
\hline
No Test (Random) & - & - & - & - & 2,433 & 30.75\% & 50.0\% \\
Hypothesis Test & 7,822 & 15,999 & 620 & 2,668 & 3,287 & 7.85\% & 1.9\%\\ 
Hypothesis Test (FPC) & 6,317 & 12,921 & 501 & 2,828 & 3,328 & 5.23\%& 2.2\%\\
Test \& Roll & 588 & 1,884 & 67 & 3,409 & 3,476 & 1.68\% & 6.4\%\\
Thompson Sampling & - & - & - & - & 3,504 & 0.57\% & - \\
Perfect Information & - & - & - & - & 3,512 & 0\% & - \\
\hline
\end{tabular}
\end{table}

The profit-maximizing test and the null hypothesis test both allocate a larger sample to the treatment group, but for different reasons. The hypothesis test does so because the treatment group has a noisier response ($s_1 < s_2$). The profit-maximizing test additionally considers that we \emph{a priori} expect greater profits from customers who receive the catalog ($m_1 < m_2$).  Even if we fix $s_1 = s_2$ and re-estimate the hierarchical model (see Appendix \ref{sec:estimating_priors}), the resulting test \& roll sample size is $n_1 = 771$ and $n_2 = 1,949$, due to the remaining differences in the priors. 

Figure \ref{fig:catalog_sample_size_lift} shows the sensitivity of the sample sizes to the expected catalog lift.  We analyzed this sensitivity by varying $\mu_2$, leaving all other parameters of the priors fixed. As the plot shows, when the expected lift is very high, a small holdout group is optimal. Thus, the common practice of using small holdout tests can be rationalized by a prior expectation that the treatment increases sales (or other desired behavior) more than the cost of marketing. The test \& roll framework provides a principled way to set the size of the holdout group by making these priors explicit.
\begin{figure}[!ht]
\centering
\includegraphics[width=0.45\textwidth]{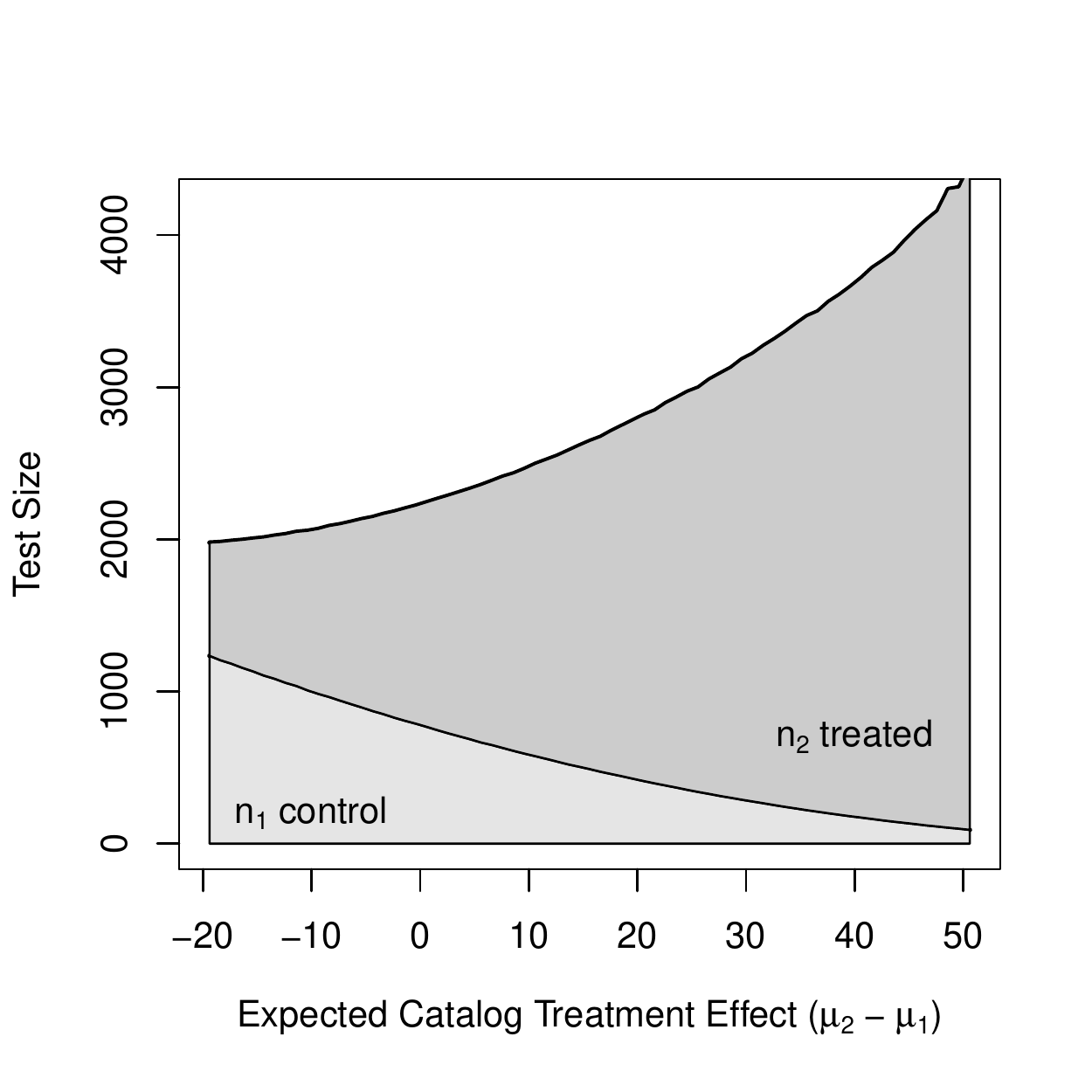}
\caption{Sensitivity of optimal test sizes to the \emph{a priori} expected increase in sales from the catalog ($\mu_2 - \mu_1$).}
\label{fig:catalog_sample_size_lift}
\end{figure}

\section{Discussion}
\label{sec:discussion}
We present a new approach to planning sample sizes for A/B tests. Unlike the classic hypothesis test that emphasizes high confidence and power, our approach optimally balances the trade-off between not deploying the best treatment in the roll stage and the cost of identifying this treatment in the test stage. The practical result is far smaller recommended test sizes that scale to the size of the available population. Most importantly, by focusing on profit, we show that marketers should not be discouraged from running smaller tests and acting on the findings; while imperfect, such smaller tests increase profit. Profit-maximizing tests may split the test sample unequally between the treatments allowing us to rationalize this common practice in marketing experiments.

The profit-maximizing sample size is optimized for marketing campaigns, which typically have a limited target population. Direct marketing campaigns are conducted with finite mailing lists. Media campaigns have a fixed budget. Webpages have limited traffic. With finite populations, the firm should identify which treatment to deploy to the majority of the population without ``wasting'' too many exposures on suboptimal treatments in the test. 

Unlike fully-dynamic approaches \citep{bertsimas2007, chick2012, schwartzBradlowFader2016WP} that vary allocation continuously, our method fits within the typical A/B testing framework, and requires no changes in testing software other than the recommended sample size. Operational complexity is reduced by providing a definitive end to the test phase, limiting the number of alternative treatments that must be maintained and providing transparency about what treatment is being selected, what evidence led to the selection of this treatment and what the expected benefit (or regret) is.  Managers can interject if they wish before ``rolling." These features make the profit-maximizing test \& roll attractive to marketers. 

One limitation of our method is that the best treatment will not always be selected. Although the error-rate may be higher than the one guaranteed by typical null hypothesis testing, the profit-maximizing test size sets the error rate optimally, based on the potential differences between treatments and resulting opportunity costs. In contexts where the decision maker is risk averse or the cost of deploying a subpar treatment is very high, as in clinical trials \citep{berry1994,cheng2003}, then other approaches are warranted.

Further extensions of the test \& roll framework presented in Section \ref{sec:model} would be useful. As data from sets of experiments becomes available \citep{johnson2017, bartstephensarvary2014}, there is opportunity to develop a catalog of priors for different test settings. Other forms for prior distributions could be considered. For example, \citet{stallard2017} extend the test \& roll framework to response distributions from the exponential family using approximations. \citep{Azevedo2019} focus on priors with fat tails.

The test \& roll method is easily extended to more than two treatments, potentially allowing for correlated priors, e.g., for a holdout group versus several alternative marketing treatments.  The cost of switching between treatments, which can be substantial for offline marketing treatments, could also be incorporated into the decision problem. If it is possible to deploy different treatments to sub-populations, then the potential to identify heterogeneous treatment effects \citep{hitsch2018, simester2018} can be considered in the test design. Similarly, time dependency in response could be considered, e.g. day-of-week or ``novelty'' effects.  These extensions all fit naturally within the test \& roll framework.

\bibliography{references}

\renewcommand{\appendixpagename}{Appendix}

\newtheorem{prop}{Proposition} \numberwithin{prop}{section}
\newtheorem{cor}[prop]{Corollary} \numberwithin{prop}{section}

\renewcommand{\theprop}{\Alph{section}.\arabic{prop}}

\begin{appendices}

\section{Normal-Normal Model Derivations}
\label{sec:NN_derivations}

\begin{prop}[Expected roll stage profit]
\label{prop:exp-prof}
When the mean profit $y_j$ is distributed $y_j \sim \mathcal{N}(m_j,s_j^2/n_j)$ with prior $m_j \sim \mathcal{N}(\mu_j,\sigma_j^2)$, and when the decision rule picks the arm with the highest posterior mean, the expected profit in the roll stage is:
\begin{align}
& E[\Pi_D] = (N-n_1-n_2)\left[\sqrt{\frac{\sigma_1^4}{\sigma_1^2+s_1^2/n_1}+\frac{\sigma_2^4}{\sigma_2^2+s_2^2/n_2}}\phi\left(\frac{\mu_1-\mu_2}{\sqrt{\frac{\sigma_1^4}{\sigma_1^2+s_1^2/n_1}+\frac{\sigma_2^4}{\sigma_2^2+s_2^2/n_2}}}\right)\right.\nonumber\\
&\left.+(\mu_1-\mu_2)\Phi\left(\frac{\mu_1-\mu_2}{\sqrt{\frac{\sigma_1^4}{\sigma_1^2+s_1^2/n_1}+\frac{\sigma_2^4}{\sigma_2^2+s_2^2/n_2}}}\right)\right]
\end{align}
\end{prop}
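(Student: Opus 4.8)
\section*{Proof proposal for Proposition \ref{prop:exp-prof}}

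The plan is to reduce the roll-stage profit to the expectation of the maximum of two independent Gaussians, and then invoke the standard closed form for a truncated-normal mean. First I would argue that the relevant quantity is $E[\max(\hat m_1, \hat m_2)]$, where $\hat m_j = E[m_j \mid \bar y_j]$ is the posterior mean for treatment $j$. Conditional on the observed test data, deploying treatment $j$ to a roll-stage customer yields expected profit $E[Y_j \mid \bar y_j] = E[m_j \mid \bar y_j] = \hat m_j$, since $E[Y_j \mid m_j] = m_j$. Because the optimal rule \eqref{eq:decision_rule_post_mean} sets $\delta = I(\hat m_1 > \hat m_2)$, the conditional roll profit per customer is $\max(\hat m_1, \hat m_2)$; taking the outer expectation over the data (the prior-predictive law) and multiplying by the roll population $N - n_1 - n_2$ gives $E[\Pi_D]$. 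The one delicate point is the tower-property step that passes from $E[\delta\, m_1 + (1-\delta) m_2]$ to $E[\delta\,\hat m_1 + (1-\delta)\hat m_2]$: because $\delta$ is measurable with respect to $(\bar y_1, \bar y_2)$ and $m_1$ is conditionally independent of $\bar y_2$ given $\bar y_1$, one has $E[m_1 \mid \bar y_1, \bar y_2] = \hat m_1$, and symmetrically for $m_2$.

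Next I would determine the marginal law of each $\hat m_j$. Writing the posterior mean as the precision-weighted average $\hat m_j = w_j \bar y_j + (1 - w_j)\mu_j$ with $w_j = \sigma_j^2/(\sigma_j^2 + s_j^2/n_j)$, and using that $\bar y_j \sim \mathcal N(\mu_j,\, \sigma_j^2 + s_j^2/n_j)$ marginally, gives $\hat m_j \sim \mathcal N\big(\mu_j,\, \tau_j^2\big)$ with $\tau_j^2 = w_j^2(\sigma_j^2 + s_j^2/n_j) = \sigma_j^4/(\sigma_j^2 + s_j^2/n_j)$. Independence of the two arms then makes $\hat m_1$ and $\hat m_2$ independent, so $\hat m_1 - \hat m_2 \sim \mathcal N(\mu_1 - \mu_2,\, v^2)$ with $v^2 = \tau_1^2 + \tau_2^2$, which is exactly the $v$ appearing in the statement.

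Finally I would apply the Gaussian max identity. Writing $\max(\hat m_1, \hat m_2) = \hat m_2 + (\hat m_1 - \hat m_2)^+$ and using the standard fact that $E[D^+] = e\,\Phi(e/v) + v\,\phi(e/v)$ for $D \sim \mathcal N(e, v^2)$, substituting $D = \hat m_1 - \hat m_2$ produces the bracketed expression, which I would multiply by $N - n_1 - n_2$.

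The main obstacle is the first step, namely justifying that the roll profit collapses to $E[\max(\hat m_1, \hat m_2)]$ through the tower property; the remaining computations (the marginal variance of a conjugate posterior mean and the truncated-normal mean) are routine. I would also take care over the sign conventions: the arrangement of $\mu_1$ versus $\mu_2$ and the additive constant ($\mu_2$, in my derivation) should be checked so that the final expression agrees with \eqref{eq:NNasym_exp_profit_deploy}, and the symmetric-prior formula \eqref{eq:NN_exp_profit_deploy_symmetric} should be recoverable by setting $\mu_1 = \mu_2$, $\sigma_1 = \sigma_2$, and $s_1 = s_2$, in which case $v$ reduces to $\sqrt{2}\,\sigma^2/\sqrt{2\sigma^2 + \tfrac{n_1+n_2}{n_1 n_2}s^2}$ and $\phi(0) = 1/\sqrt{2\pi}$ recovers the half-normal factor $\sqrt{2}/\sqrt{\pi}$.
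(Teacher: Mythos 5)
Your proposal is correct, and it takes a genuinely different route from the paper's proof. The paper evaluates the four-fold integral over $(m_1,m_2,y_1,y_2)$ head-on, for an \emph{arbitrary} linear decision rule $\delta = I(a_1+b_1y_1 > a_2+b_2y_2)$, by repeatedly applying the identities $\int\Phi\left(\frac{y+b}{a}\right)\phi(y)\,dy = \Phi\left(\frac{b}{\sqrt{a^2+1}}\right)$ and $\int y\,\Phi\left(\frac{y+b}{a}\right)\phi(y)\,dy = \frac{1}{\sqrt{a^2+1}}\phi\left(\frac{b}{\sqrt{a^2+1}}\right)$ with successive changes of variables, and only at the end substitutes the posterior-mean coefficients $a_j,b_j$. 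You instead collapse the roll-stage payoff to $E[\max(\hat m_1,\hat m_2)]$ via the tower property (your conditional-independence justification that $E[m_1\mid \bar y_1,\bar y_2]=\hat m_1$ is exactly the step that needs care, and it holds because $(m_1,\bar y_1)$ is independent of $(m_2,\bar y_2)$), then use the prior-predictive law $\hat m_j \sim \mathcal N\left(\mu_j,\ \sigma_j^4/(\sigma_j^2+s_j^2/n_j)\right)$ and the standard identity $E[D^+]=e\Phi(e/v)+v\phi(e/v)$. Your route is shorter and makes the structure of $v$ transparent (a sum of prior-predictive variances of the two posterior means). The paper's route buys generality: its intermediate formula is valid for any linear rule, not only the optimal one. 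That generality is not vacuous; with symmetric priors but $n_1\ne n_2$, the display \eqref{eq:NN_exp_profit_deploy_symmetric} is the expected profit of the sample-mean rule $I(\bar y_1 > \bar y_2)$, which differs from the posterior-mean rule, and obtaining it in your framework would require an extra bivariate-normal covariance calculation rather than a max identity.

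Two further points. First, the discrepancy you flag over the additive constant is real, and your version is the correct one: the per-customer roll profit is $\mu_2 + e\Phi(e/v)+v\phi(e/v)$ with $e=\mu_1-\mu_2$, which after reparameterization equals the bracketed expression in \eqref{eq:NNasym_exp_profit_deploy}, including its baseline term. The proposition display omits the baseline because the last step of the paper's own derivation combines $\mu_1\Phi(x)+\mu_2\Phi(-x)$ into $(\mu_1-\mu_2)\Phi(x)$, whereas it equals $\mu_2+(\mu_1-\mu_2)\Phi(x)$. Second, your closing sanity check has a small slip: with $\mu_1=\mu_2$, $\sigma_1=\sigma_2=\sigma$, $s_1=s_2=s$ and $n_1=n_2=n$, your $v$ equals $\sqrt{2}\sigma^2/\sqrt{\sigma^2+s^2/n}$, which is $2\sigma^2/\sqrt{2\sigma^2+\tfrac{n_1+n_2}{n_1n_2}s^2}$ rather than $\sqrt{2}\sigma^2/\sqrt{2\sigma^2+\tfrac{n_1+n_2}{n_1n_2}s^2}$; with that corrected factor, $v\phi(0)$ indeed reproduces the incremental-gain term $\frac{\sqrt{2}\sigma^2}{\sqrt{\pi}\sqrt{2\sigma^2+\frac{n_1+n_2}{n_1n_2}s^2}}$ of \eqref{eq:NN_exp_profit_deploy_symmetric}, and, as noted above, the reduction is exact only at $n_1=n_2$ (which is the case relevant for the optimum).
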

\begin{proof}
Denote the decision rule $\delta(y_1,y_2) = I(a_1+b_1 y_1>a_2+b_2 y_2)$. The linear decision rule includes the optimal one that uses the posterior predictive distribution with $a_j = \frac{s_j^2/n_j \mu_j}{\sigma_j^2+s_j^2/n_j}$ and $b_j = \frac{\sigma_j^2}{\sigma_j^2+s_j^2/n_j}$.
Denote the pdf of $y_j$ as $f_j$ and its cdf as $F_j$. Denote the pdf of $m_j$ as $g_j$ and its cdf as $G_j$.

The expected value from the roll stage is:
{
\footnotesize
\begin{equation}\label{eq:exp}
E[\Pi_D]=\int_{m_1} \int_{m_2} \int_{y_1} \int_{y_2} (N-n_1-n_2)\left(\delta(y_1,y_2) m_1 + (1-\delta(y_1,y_2)) m_2 \right) f_2(y_2) f_1(y_1) g_2(m_2) g_1(m_1) dy_2 dy_1 dm_2 dm_1 
\end{equation}
}
In the derivation, we will make multiple uses of the following identities:
\begin{equation}\label{eq:yFf}
    \int_{-\infty}^{\infty} y \Phi\left(\frac{y+b}{a}\right) \phi(y) dy = \frac{1}{\sqrt{a^2+1}} \phi \left(\frac{b}{\sqrt{a^2+1}}\right)
\end{equation}
and:
\begin{equation}\label{eq:Ff}
    \int_{-\infty}^{\infty} \Phi\left(\frac{y+b}{a}\right) \phi(y) dy = \Phi \left(\frac{b}{\sqrt{a^2+1}}\right)
\end{equation}

The expression $(N-n_1-n_2)$ can be taken out of the integrand.
Continuing with the first additive in the integral (the second will be symmetric):
\begin{align}
    & \int_{y_1} \int_{y_2} \delta(y_1,y_2) m_1 f_2(y_2) f_1(y_1) dy_1 dy_2 \\
    & = \int_{y_1} \int_{-\infty}^{\frac{a_1-a_2+b_1 y_1}{b_2}} m_1 f_2(y_2) f_1(y_1) dy_1 dy_2 \\
    & = m_1 \int_{y_1} F_2\left(\frac{a_1-a_2+b_1 y_1}{b_2}\right) f_1(y_1) dy_1 \\
    & = m_1 \int_{y_1} \Phi\left(\frac{\frac{a_1-a_2+b_1 y_1}{b_2}-m_2}{s_2/\sqrt{n_2}}\right) \frac{1}{s_1/\sqrt{n_1}}\phi \left(\frac{y_1-m_1}{s_1/\sqrt{n_1}}\right) dy_1 \\
    & = m_1 \int_{y} \Phi\left(\frac{y+\frac{a_1-a_2+b_1 m_1-b_2 m_2}{b_1 s_1/\sqrt{n_1}}}{\frac{b_2 s_2/\sqrt{n_2}}{b_1 s_1/\sqrt{n_1}}}\right) \phi(y) dy
\end{align}
The last equation uses $y=\frac{y_1-m_1}{s_1/\sqrt{n_1}}$ as a change of variables.

Using identity (\ref{eq:Ff}), the final integral equals:
\begin{align}
& m_1 \int_{y} \Phi\left(\frac{y+\frac{a_1-a_2+b_1 m_1-b_2 m_2}{b_1 s_1/\sqrt{n_1}}}{\frac{b_2 s_2/\sqrt{n_2}}{b_1 s_1/\sqrt{n_1}}}\right) \phi(y) dy \\
& = m_1 \Phi\left(\frac{a_1-a_2+b_1 m_1-b_2 m_2}{\sqrt{b_1^2 s_1^2/n_1+b_2^2 s_2^2/n_2}}\right)
\end{align}

Plugging back into the expected value in (\ref{eq:exp}), the expected value of the roll stage equals:
\begin{align}
    (N-n_1-n_2) & \left(\int_{m_1} \int_{m_2} m_1 \Phi\left(\frac{a_1-a_2+b_1 m_1-b_2 m_2}{\sqrt{b_1^2 s_1^2/n_1+b_2^2 s_2^2/n_2}}\right) g_2(m_2) g_1(m_1) dm_1 dm_2 \right. \nonumber \\
    & \left. + \int_{m_2} \int_{m_1} m_2 \Phi\left(\frac{a_2-a_1+b_2 m_2-b_1 m_1}{\sqrt{b_1^2 s_1^2/n_1+b_2^2 s_2^2/n_2}}\right) g_1(m_1) g_2(m_2) dm_1 dm_2\right)
\end{align}

Using identity (\ref{eq:Ff}) again, the first additive equals:
\begin{align}
    & \int_{m_1} \int_{m_2} m_1 \Phi\left(\frac{a_1-a_2+b_1 m_1-b_2 m_2}{\sqrt{b_1^2 s_1^2/n_1+b_2^2 s_2^2/n_2}}\right) g_2(m_2) g_1(m_1) dm_1 dm_2 \\
    & = \int_{m_1} m_1 \int_{m} \left(1-\Phi\left(\frac{m+\frac{a_2-a_1-b_1 m_1+b_2 \mu_2}{b_2 \sigma_2}}{\frac{\sqrt{b_1^2 s_1^2/n_1+b_2^2 s_2^2/n_2}}{b_2 \sigma_2}}\right)\right) \phi(m) dm g_1(m_1) dm_1 \\
    & = \int_{m_1} m_1 \Phi\left(\frac{a_1-a_2 +b_1 m_1 -b_2 \mu_2}{\sqrt{b_1^2 s_1^2/n_1+b_2^2 s_2^2/n_2+b_2^2 \sigma_2^2}}\right) \frac{1}{\sigma_1} \phi\left(\frac{m_1-\mu_1}{\sigma_1}\right) dm_1 \\
    & = \int_m (m \sigma_1+\mu_1) \Phi \left(\frac{m+\frac{a_1-a_2+b_1 \mu_1 -b_2 \mu_2}{b_1 \sigma_1}}{\sqrt{\frac{b_1^2 s_1^2/n_1+b_2^2 s_2^2/n_2+b_2^2\sigma_2^2}{b_1^2 \sigma_1^2}}}\right) \phi(m) dm
\end{align}
where the last equation uses the change of variables $m=\frac{m_1-\mu_1}{\sigma_1}$.

Using identities (\ref{eq:yFf}) and (\ref{eq:Ff}), we receive:
\begin{align}
    & \int_m (m \sigma_1+\mu_1) \Phi \left(\frac{m+\frac{a_1-a_2+b_1 \mu_1 -b_2 \mu_2}{b_1 \sigma_1}}{\sqrt{\frac{b_1^2 s_1^2/n_1+b_2^2 s_2^2/n_2+b_2^2\sigma_2^2}{b_1^2 \sigma_1^2}}}\right) \phi(m) dm \\
    & = \frac{b_1 \sigma_1^2}{\sqrt{b_1^2 s_1^2/n_1+b_2^2 s_2^2/n_2+b_1^2 \sigma_1^2+b_2^2 \sigma_2^2}} \phi\left(\frac{a_1-a_2+b_1\mu_1-b_2 \mu_2}{\sqrt{b_1^2 s_1^2/n_1+b_2^2 s_2^2/n_2+b_1^2 \sigma_1^2+b_2^2 \sigma_2^2}}\right) \nonumber \\ & + \mu_1 \Phi\left(\frac{a_1-a_2+b_1\mu_1-b_2 \mu_2}{\sqrt{b_1^2 s_1^2/n_1+b_2^2 s_2^2/n_2+b_1^2 \sigma_1^2+b_2^2 \sigma_2^2}}\right)
\end{align}

Using symmetry, the \emph{a priori} expected value of the roll stage is:
{
\footnotesize
\begin{align}
E[\Pi_D] & = (N-n_1-n_2)\left[\frac{b_1 \sigma_1^2+b_2 \sigma_2^2}{\sqrt{b_1^2 s_1^2/n_1+b_2^2 s_2^2/n_2+b_1^2 \sigma_1^2+b_2^2 \sigma_2^2}} \phi\left(\frac{a_1-a_2+b_1\mu_1-b_2 \mu_2}{\sqrt{b_1^2 s_1^2/n_1+b_2^2 s_2^2/n_2+b_1^2 \sigma_1^2+b_2^2 \sigma_2^2}}\right)\right. \nonumber\\ 
& \left.+ (\mu_1-\mu_2) \Phi\left(\frac{a_1-a_2+b_1\mu_1-b_2 \mu_2}{\sqrt{b_1^2 s_1^2/n_1+b_2^2 s_2^2/n_2+b_1^2 \sigma_1^2+b_2^2 \sigma_2^2}}\right)\right]
\end{align}
}

Plugging in the posterior mean parameters for $a_j$ and $b_j$ (as they are optimal), the roll stage expected value in the fully asymmetric model is:
\begin{align}
& E[\Pi_D] = (N-n_1-n_2)\left[\sqrt{\frac{\sigma_1^4}{\sigma_1^2+s_1^2/n_1}+\frac{\sigma_2^4}{\sigma_2^2+s_2^2/n_2}}\phi\left(\frac{\mu_1-\mu_2}{\sqrt{\frac{\sigma_1^4}{\sigma_1^2+s_1^2/n_1}+\frac{\sigma_2^4}{\sigma_2^2+s_2^2/n_2}}}\right)\right.\nonumber\\
&\left.+(\mu_1-\mu_2)\Phi\left(\frac{\mu_1-\mu_2}{\sqrt{\frac{\sigma_1^4}{\sigma_1^2+s_1^2/n_1}+\frac{\sigma_2^4}{\sigma_2^2+s_2^2/n_2}}}\right)\right]
\end{align}
where in the text we set $e = \mu_1 - \mu_2$ and $v =  \sqrt{\frac{\sigma_1^4}{\sigma_1^2+s_1^2/n_1}+\frac{\sigma_2^4}{\sigma_2^2+s_2^2/n_2}}$ in Equation (\ref{eq:NNasym_exp_profit_deploy}). Thus we have completed the proof for the asymmetric case.

To get the expression in (\ref{eq:NN_exp_profit_deploy_symmetric}) we plug-in $\mu_1=\mu_2=\mu$, $\sigma_1=\sigma_2=\sigma$ and $s_1=s_2=s$ into the above expression.
\end{proof}

\begin{prop}[Profit maximizing sample size]
\label{prop:sample-size}
When the mean profits $y_j$ are distributed $y_j \sim \mathcal{N}(m_j,s^2/n_j)$ with prior $m_j \sim \mathcal{N}(\mu,\sigma^2)$, the profit-maximizing sample size is: 
\[n_1^* = n_2^* 
= \sqrt{\frac{N}{4}\left( \frac{s}{\sigma} \right)^2 + \left( \frac{3}{4} \left( \frac{s}{\sigma} \right)^2  \right)^2 } -  \frac{3}{4} \left(\frac{s}{\sigma} \right)^2 \]
\end{prop}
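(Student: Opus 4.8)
The plan is to combine the two stages into a single objective and optimize. Adding the test-stage profit $E[\Pi_\textnormal{T}] = (n_1+n_2)\mu$ from \eqref{eq:NN_exp_profit_test} to the roll-stage profit from \eqref{eq:NN_exp_profit_deploy_symmetric} (the symmetric specialization of Proposition \ref{prop:exp-prof}), the $\mu$ terms collapse and the total expected profit becomes $E[\Pi] = N\mu + (N-n_1-n_2)\frac{\sqrt2\,\sigma^2}{\sqrt\pi\sqrt{2\sigma^2 + (\tfrac1{n_1}+\tfrac1{n_2})s^2}}$, where I have used $\frac{n_1+n_2}{n_1 n_2} = \frac1{n_1}+\frac1{n_2}$. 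Since $N\mu$ does not depend on the sample sizes, the problem reduces to maximizing the second term over $n_1, n_2$ subject to $n_1 + n_2 < N$.

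First I would establish that the optimum is symmetric, $n_1 = n_2$, rather than assume it. Fixing the total $t = n_1 + n_2$, the factor $N - t$ and the prefactor $\sqrt2\sigma^2/\sqrt\pi$ are held constant, while the quantity to be maximized is strictly decreasing in $h := \frac1{n_1}+\frac1{n_2}$ (because $h$ appears inside the denominator's square root). By convexity of $x \mapsto 1/x$ (equivalently the AM--HM inequality), $h$ is minimized at $n_1 = n_2 = t/2$ for every fixed $t$. Hence any maximizer splits the sample evenly, which justifies writing $n_1 = n_2 = n$ and reduces the problem to one variable.

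Setting $n_1 = n_2 = n$ gives $\frac1{n_1}+\frac1{n_2} = 2/n$, so the denominator simplifies to $\sqrt2\,\sqrt{\sigma^2 + s^2/n}$ and the objective becomes, up to the positive constant $\sigma^2/\sqrt\pi$, the function $f(n) = (N-2n)(\sigma^2 + s^2/n)^{-1/2}$. I would then differentiate, set $f'(n)=0$, and clear the common factor $(\sigma^2 + s^2/n)^{-3/2} > 0$; multiplying through by $2n^2$ yields the quadratic $4\sigma^2 n^2 + 6 s^2 n - N s^2 = 0$. Taking its positive root and dividing numerator and denominator by $\sigma^2$ recovers exactly $n^* = \sqrt{\tfrac{N}{4}(s/\sigma)^2 + (\tfrac34 (s/\sigma)^2)^2} - \tfrac34 (s/\sigma)^2$ once $(s/\sigma)^2$ is recognized as the natural variable.

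The steps are mostly routine algebra; the two places requiring care are (i) justifying $n_1 = n_2$ at the optimum, handled by the convexity argument above, and (ii) confirming the positive quadratic root is the global maximizer. For (ii) I would note that $f(n) > 0$ on $(0, N/2)$, that $f$ vanishes at the boundary $n = N/2$, and that $f(n) \to 0$ as $n \to 0^+$ (the factor $N-2n$ stays bounded while $(\sigma^2+s^2/n)^{-1/2}\to 0$), so the unique interior critical point must be the maximum. I would also remark that the identity flagged immediately after \eqref{eq:NN_sample_size} gives $n^* \le \sqrt{N}\,s/(2\sigma) < N$, so the solution is always interior and feasible, with the continuous optimum rounded to an integer in practice.
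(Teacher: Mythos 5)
Your proof is correct and follows essentially the same route as the paper's: write the total expected profit as $N\mu + (N-2n)\,\frac{\sqrt{2}\sigma^2}{\sqrt{\pi}\sqrt{2\sigma^2 + 2s^2/n}}$, take the first-order condition in $n$, and solve the resulting quadratic $4\sigma^2 n^2 + 6 s^2 n - N s^2 = 0$ for its positive root. The only difference is that you supply two details the paper asserts without argument --- the AM--HM justification that any maximizer splits the sample evenly, and the boundary check ($f \to 0$ as $n \to 0^+$ and $f = 0$ at $n = N/2$) showing the unique interior critical point is the global maximum --- both of which are correct and make the argument tighter.
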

\begin{proof}
Because the priors are symmetric, the optimal sample sizes will be equal.
Denote them as $n=n_1=n_2$.

The expected profit of the experiment with symmetric priors is:
\begin{equation}
E[\Pi_T]+E[\Pi_D]=N\mu+(N - 2 n) \left[\frac{\sqrt{2} \sigma^2}{\sqrt{\pi} \sqrt{2\sigma^2 + \frac{2}{n}s^2} }\right]
\end{equation}
The FOC w.r.t to $n$ is:
\begin{equation}
\frac{\sigma ^2 \sqrt{\frac{s^2}{n}+\sigma ^2} \left(4 n^2 \sigma ^2+6 n s^2-N s^2\right)}{2 \sqrt{\pi } \left(n \sigma ^2+s^2\right)^2} =0
\end{equation}
which is equivalent to solving $4 n^2 \sigma ^2+6 n s^2-N s^2=0$, yielding the optimal sample size formula.
\end{proof}

\begin{cor}[Expected error rate]
\label{cor:error-rate}
Under symmetric priors, the expected rates of making the incorrect choice in the roll stage, $E[Pr(\delta(y_1, y_2) = 1 | m_1 < m_2)]$ and $E[Pr(\delta(y_1, y_2) = 0 | m_1 > m_2)]$, both equal $\frac{1}{4} - \frac{1}{2 \pi}\arctan\left(\frac{\sqrt{2}\sigma}{s} \sqrt{\frac{n_1 n_2}{n_1+n_2}}\right)$.
\end{cor}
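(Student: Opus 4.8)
The plan is to integrate the conditional error probability in \eqref{eq:NN_error_rate_conditional} against the joint prior on $(m_1,m_2)$, restricted to the region where an error is actually possible. First I would observe that under the symmetric priors the two treatments are exchangeable, so relabeling $1\leftrightarrow 2$ carries the event $\{\delta=1,\,m_1<m_2\}$ bijectively onto $\{\delta=0,\,m_1>m_2\}$; the two expected error rates are therefore automatically equal, and it suffices to evaluate one of them, namely the probability of the joint error event $\Pr(\delta=1,\,m_1<m_2)$. Writing $g$ for the $\mathcal{N}(\mu,\sigma^2)$ prior density and $\tau=s\sqrt{1/n_1+1/n_2}$, equation \eqref{eq:NN_error_rate_conditional} gives
\[
\Pr(\delta=1,\,m_1<m_2)=\iint_{m_1<m_2}\left[1-\Phi\!\left(\frac{m_2-m_1}{\tau}\right)\right]g(m_1)\,g(m_2)\,dm_1\,dm_2 .
\]

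The next step is to collapse this double integral to a single one. Since the integrand depends on $(m_1,m_2)$ only through the difference $D=m_2-m_1$, and $D\sim\mathcal{N}(0,2\sigma^2)$ under the independent priors, the region $\{m_1<m_2\}$ becomes $\{D>0\}$, so
\[
\Pr(\delta=1,\,m_1<m_2)=\int_0^\infty\left[1-\Phi\!\left(\frac{d}{\tau}\right)\right]\frac{1}{\sqrt{2}\,\sigma}\,\phi\!\left(\frac{d}{\sqrt{2}\,\sigma}\right)dd .
\]
Standardizing with $u=d/(\sqrt{2}\,\sigma)$ and setting $c=\sqrt{2}\,\sigma/\tau=\frac{\sqrt{2}\,\sigma}{s}\sqrt{\frac{n_1 n_2}{n_1+n_2}}$ turns this into $\int_0^\infty[1-\Phi(cu)]\phi(u)\,du=\tfrac12-\int_0^\infty\Phi(cu)\,\phi(u)\,du$, using $\int_0^\infty\phi=\tfrac12$. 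Everything now reduces to the single Gaussian integral $I(c)=\int_0^\infty\Phi(cu)\,\phi(u)\,du$.

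Evaluating $I(c)$ in closed form is the crux of the argument, and is where I expect the main obstacle to lie. Rather than manipulating the orthant-probability form directly, I would differentiate under the integral sign: $I'(c)=\int_0^\infty u\,\phi(cu)\,\phi(u)\,du=\frac{1}{2\pi}\int_0^\infty u\,e^{-(1+c^2)u^2/2}\,du=\frac{1}{2\pi(1+c^2)}$, a routine Gaussian moment. Integrating in $c$ and fixing the constant by the boundary value $I(0)=\Phi(0)\int_0^\infty\phi=\tfrac14$ yields $I(c)=\tfrac14+\tfrac{1}{2\pi}\arctan(c)$. Substituting back gives $\tfrac12-\bigl(\tfrac14+\tfrac{1}{2\pi}\arctan c\bigr)=\tfrac14-\tfrac{1}{2\pi}\arctan(c)$, which is exactly the claimed expression once $c$ is written out. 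This Leibniz-rule route conveniently sidesteps having to recognize the bivariate-normal orthant identity $\arcsin\!\bigl(c/\sqrt{1+c^2}\bigr)=\arctan c$; the only care required is to justify differentiation under the integral, which follows from dominated convergence since the $c$-derivative is uniformly integrable on compact $c$-intervals.
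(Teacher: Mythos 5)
Your proposal is correct and follows essentially the same route as the paper: both reduce the expected error rate to a single integral of the conditional error probability against the prior on the difference $m_1-m_2 \sim \mathcal{N}(0,2\sigma^2)$ over a half-line (your joint-event reading $\Pr(\delta=1,\,m_1<m_2)$ is exactly what the paper's proof computes), arriving at the same arctan expression. The only difference is that the paper invokes the identity $\int_{-\infty}^0 \phi(ax)\Phi(bx)\,dx = \frac{1}{2\pi |a|}\left(\frac{\pi}{2}-\arctan\left(\frac{b}{|a|}\right)\right)$ as a known result, whereas you derive the equivalent fact by differentiating $I(c)=\int_0^\infty \Phi(cu)\,\phi(u)\,du$ under the integral sign and fixing $I(0)=\tfrac14$ --- a self-contained substitute for the table lookup, but not a different proof strategy.
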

\begin{proof}
Using the fact that $y_j \sim \mathcal{N}(m_j,s^2/n_j)$ and because in the symmetric case the decision rule is to pick the treatment with the highest mean:
\begin{equation}
    Pr(\delta(y_1,y_2)=1|m_1,m_2) = Pr(y_1-y_2>0|m_1,m_2) = \Phi\left(\frac{m_1-m_2}{s\sqrt{\frac{1}{n_1}+\frac{1}{n_2}}}\right)
\end{equation}

If we denote $m=m_1-m_2$, then $m_1-m_2$ has a prior $\mathcal{N}(0,2\sigma^2)$. The expected error rate is therefore:
\begin{align}
    E[\delta(y_1,y_2)=1|m_1>m_2] & = \int_{-\infty}^0 Pr(y_1-y_2>0|m) Pr(m) dm \nonumber \\
    & = \int_{-\infty}^0 \Phi\left(\frac{m}{s\sqrt{\frac{1}{n_1}+\frac{1}{n_2}}}\right) \frac{1}{\sqrt{2}\sigma} \phi \left(\frac{m}{\sqrt{2}\sigma}\right)dm
\end{align}

Using the identity $\int_{-\infty}^0 \phi(ax)\Phi(bx)dx = \frac{1}{2\pi |a|}\left(\frac{\pi}{2}-\arctan\left(\frac{b}{|a|}\right)\right)$, we get the expression:
\[ E[\delta(y_1,y_2)=1|m_1>m_2] = E[\delta(y_1,y_2)=0|m_1<m_2] = \frac{1}{4} - \frac{1}{2 \pi}\arctan\left(\frac{\sqrt{2}\sigma}{s} \sqrt{\frac{n_1 n_2}{n_1+n_2}}\right)\]
\end{proof}

\begin{prop}[Regret]
\label{prop:regret}
In the symmetric Normal-Normal model with a population size $N$:
\begin{enumerate}
    \item The expected value of perfect information is $E[\Pi|\textnormal{PI}] = N\left(\mu + \frac{\sigma}{\sqrt{\pi}}\right)$.
    \item The regret of the profit-maximizing design is $O(\sqrt{N})$.
    \item The regret from using a classic hypothesis test is $\Omega(N)$.
\end{enumerate}
\end{prop}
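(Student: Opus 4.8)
The plan is to treat the three parts in sequence, reusing the exact profit expressions from Propositions~\ref{prop:exp-prof} and~\ref{prop:sample-size}. For part 1, I would argue that an omniscient marketer deploys the arm with the larger mean to the whole population, so $E[\Pi\mid\textnormal{PI}] = N\,E[\max(m_1,m_2)]$. Writing $\max(m_1,m_2)=\tfrac12(m_1+m_2)+\tfrac12|m_1-m_2|$ and using that $m_1-m_2\sim\mathcal N(0,2\sigma^2)$, so that $E|m_1-m_2|=\sqrt{2\sigma^2}\cdot\sqrt{2/\pi}=2\sigma/\sqrt\pi$ (the half-normal mean already quoted in the body), gives $E[\max(m_1,m_2)]=\mu+\sigma/\sqrt\pi$ and hence the claimed $E[\Pi\mid\textnormal{PI}]=N(\mu+\sigma/\sqrt\pi)$.

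For part 2, I would first assemble the exact regret. Combining the test profit $2n^*\mu$ from \eqref{eq:NN_exp_profit_test} with the roll profit from \eqref{eq:NN_exp_profit_deploy_symmetric} specialized to $n_1=n_2=n^*$, which reduces to $E[\Pi_D^*]=(N-2n^*)\big[\mu+\sigma^2/(\sqrt\pi\,w)\big]$ with $w:=\sqrt{\sigma^2+s^2/n^*}$, and subtracting from part 1 cancels the $\mu$ terms and reproduces the equality in \eqref{eq:exp-regret}, which I rewrite as $\tfrac{\sigma}{\sqrt\pi}\big[N-\sigma(N-2n^*)/w\big]$. The key move is to eliminate $N$ via the first-order condition underlying Proposition~\ref{prop:sample-size}, namely $4(n^*)^2\sigma^2+6n^*s^2=Ns^2$, which rearranges to $N-2n^*=4(n^*)^2w^2/s^2$. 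Substituting this, using $N=(N-2n^*)+2n^*$, and simplifying $w-\sigma=(s^2/n^*)/(w+\sigma)$ collapses the regret to $\tfrac{\sigma}{\sqrt\pi}\big[\tfrac{4n^*w}{w+\sigma}+2n^*\big]$. Since $w\ge\sigma>0$ forces $w/(w+\sigma)\le1$, this is at most $6n^*\sigma/\sqrt\pi$, and the inequality $n^*\le\sqrt N\,s/(2\sigma)$ derived right after \eqref{eq:NN_sample_size} turns it into exactly $3s\sqrt N/\sqrt\pi=O(\sqrt N)$.

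For part 3, I would substitute the hypothesis-test size $n_{HT}=2ks^2/d^2$, with $k:=(z_{1-\alpha/2}+z_\beta)^2$, into the \emph{same} exact regret expression (which is valid for any $n$, not just $n^*$). Both addends are nonnegative, so the regret is at least its first term $N\tfrac{\sigma}{\sqrt\pi}\big(1-\sigma/\sqrt{\sigma^2+s^2/n_{HT}}\big)$. Here $s^2/n_{HT}=d^2/(2k)$, and rationalizing the numerator gives $1-\sigma/\sqrt{\sigma^2+s^2/n_{HT}}=d^2/\big[\sqrt{2k\sigma^2+d^2}\,(\sqrt{2k\sigma^2+d^2}+\sqrt{2k}\,\sigma)\big]$. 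Bounding the denominator above by AM--GM, $\sqrt{2k}\,\sigma\sqrt{2k\sigma^2+d^2}\le 2k\sigma^2+d^2/2$, shows the denominator is at most $4k\sigma^2+\tfrac32 d^2\le 4k\sigma^2+2d^2$, which yields the stated lower bound $N\tfrac{\sigma}{\sqrt\pi}\,d^2/(4k\sigma^2+2d^2)=\Omega(N)$, since no factor in the fraction depends on $N$. For the finite-population version I would note that $n_{FPC}\le n_{HT}$ while the first term is decreasing in $n$, so the identical lower bound applies.

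The step I expect to be the main obstacle is the simplification in part 2: the raw regret mixes $N$ and $n^*$ and does not visibly scale like $\sqrt N$, and the clean $O(\sqrt N)$ rate emerges only after using the optimality condition to trade $N-2n^*$ for $4(n^*)^2w^2/s^2$ and then invoking the single bound $n^*\le\sqrt N\,s/(2\sigma)$. Parts 1 and 3 are comparatively mechanical once the exact regret formula is in hand.
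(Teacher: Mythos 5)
Your proposal is correct, and its overall skeleton matches the paper's proof: both start from the exact regret expression \eqref{eq:exp-regret} and part 1 is the same max-of-two-i.i.d.-normals computation (the paper cites the known mean $\mu+\sigma/\sqrt{\pi}$; you derive it via the half-normal decomposition of $|m_1-m_2|$). The genuine divergence is in part 2. The paper bounds the two addends of \eqref{eq:exp-regret} separately: the first via $\sqrt{x+1}-\sqrt{x}<\frac{1}{2\sqrt{x}}$ together with a lower bound on $n^*$ that is only valid when $N>4s^2/\sigma^2$ (yielding $2s\sqrt{N}/\sqrt{\pi}$), and the second via $n^*\le\sqrt{N}\,s/(2\sigma)$ (yielding $s\sqrt{N}/\sqrt{\pi}$). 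You instead use the first-order condition $4(n^*)^2\sigma^2+6n^*s^2=Ns^2$ to eliminate $N$ entirely, collapsing the whole regret to $\frac{\sigma}{\sqrt{\pi}}\left(\frac{4n^*w}{w+\sigma}+2n^*\right)\le\frac{6n^*\sigma}{\sqrt{\pi}}$ with $w=\sqrt{\sigma^2+s^2/n^*}$, and then invoke the single bound $n^*\le\sqrt{N}\,s/(2\sigma)$; I verified the algebra ($N-2n^*=4(n^*)^2w^2/s^2$ and $w-\sigma=(s^2/n^*)/(w+\sigma)$) and it is exact. This buys you the same constant $3s\sqrt{N}/\sqrt{\pi}$ but unconditionally, without the paper's implicit large-$N$ restriction---a modest strengthening and arguably a cleaner argument. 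In part 3 your rationalization-plus-AM--GM step reproduces exactly the paper's bound $N\frac{\sigma}{\sqrt{\pi}}\,\frac{d^2}{4(z_{1-\alpha/2}+z_\beta)^2\sigma^2+2d^2}$, where the paper uses $\sqrt{x+1}-\sqrt{x}>\frac{1}{2\sqrt{x+1}}$ instead; this is a cosmetic difference. For the finite-population correction, the paper re-substitutes $n_{FPC}$ into the regret and bounds directly (using $N/2\le N-1$ for $N\ge 2$), whereas your monotonicity shortcut rests on the claim $n_{FPC}\le n_{HT}$, which holds if and only if $d\le 2(z_{1-\alpha/2}+z_\beta)s$, equivalently $n_{HT}\ge 1/2$; since any sensible test satisfies this, your argument goes through, but you should state that side condition explicitly rather than assert the inequality as automatic.
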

\begin{proof}
Perfect information allows the marketer to pick the treatment with the highest mean $m_j$ without testing, yielding expected profit of $N \cdot E[max(m_1,m_2)]$.
Because both treatments come from the same prior $\mathcal{N}(\mu,\sigma^2)$, the mean of the maximum of two i.i.d Normal variables is $\mu + \frac{\sigma}{\sqrt{\pi}}$, proving the first item.

To prove the second item, we calculate the regret from using the profit maximizing design:
\begin{align}\label{prof-max-regret}
    E[\Pi|PI] - E[\Pi^*_D+\Pi^*_T] & = N \frac{\sigma}{\sqrt{\pi}}\left(1-\frac{\sigma}{\sqrt{\sigma^2+\frac{s^2}{n^*}}}\right)+\frac{2n^*\sigma^2}{\sqrt{\pi}\sqrt{\sigma^2+\frac{s^2}{n^*}}}
\end{align}

Using the inequality $\sqrt{x+1}-\sqrt{x} < \frac{1}{2 \sqrt{x}}$ for $x>0$, and denoting $x=n^*\sigma^2/s^2$, the first additive results in:
\begin{align}
    & N \frac{\sigma}{\sqrt{\pi}}\left(1-\frac{\sigma}{\sqrt{\sigma^2+\frac{s^2}{n^*}}}\right) \\ 
    & \le N \frac{\sigma}{\sqrt{\pi}} \frac{1}{2 \sqrt{\sigma^2 n^* /s^2 +1}\sqrt{\sigma^2 n^* /s^2}} \\
    & \le N \frac{\sigma}{\sqrt{\pi}} \frac{1}{2 n^*\sigma^2/s^2} \\
\end{align}

Plugging in $n^*=\sqrt{\frac{N}{4}\left( \frac{s}{\sigma} \right)^2 + \left( \frac{3}{4} \left( \frac{s}{\sigma} \right)^2  \right)^2 } -  \frac{3}{4} \left(\frac{s}{\sigma} \right)^2$, the denominator $2 n^*\sigma^2/s^2$ is larger than $\frac{1}{2}\frac{\sigma}{s}\sqrt{N}$ when $N>4 \frac{s^2}{\sigma^2}$.
Hence, we can bound the first additive in the regret (\ref{prof-max-regret}) from above by:
\begin{equation}
    N \frac{\sigma}{\sqrt{\pi}}\left(1-\frac{\sigma}{\sqrt{\sigma^2+\frac{s^2}{n^*}}}\right) \le \frac{2s\sqrt{N}}{\sqrt{\pi}}
\end{equation}

To bound the second additive:
\begin{align}
    \frac{2n^*\sigma^2}{\sqrt{\pi}\sqrt{\sigma^2+\frac{s^2}{n^*}}} \le \frac{2n^*\sigma^2}{\sqrt{\pi}\sqrt{\sigma^2}} = \frac{2n^*\sigma}{\sqrt{\pi}} < \frac{s\sqrt{N}}{\sqrt{\pi}}
\end{align}
The first inequality uses the fact that $\frac{s^2}{n^*}$ is positive, while the second uses the fact that $n^* < \sqrt{N}\frac{s}{2\sigma}$ as shown in the main text.

Summing the two additives shows that the regret of the profit maximizing design is smaller than $\frac{3s \sqrt{N}}{\sqrt{\pi}}$ proving the second item that the regret is $O(\sqrt{N})$.

To prove the third item, we plug-in the sample size from (\ref{eq:nht_sample_size}) for $n$ in the regret formula:
\begin{align}
    E[\Pi|PI] - E[\Pi_D+\Pi_T] & = N \frac{\sigma}{\sqrt{\pi}}\left(1-\frac{\sigma}{\sqrt{\sigma^2+\frac{s^2}{n}}}\right)+\frac{2n\sigma^2}{\sqrt{\pi}\sqrt{\sigma^2+\frac{s^2}{n}}} \\
    & > N \frac{\sigma}{\sqrt{\pi}}\left(1-\frac{\sigma}{\sqrt{\sigma^2+\frac{s^2}{n}}}\right) = N \frac{\sigma}{\sqrt{\pi}}\left(1-\frac{\sigma}{\sqrt{\sigma^2+\frac{\delta^2}{2z^2}}}\right) \label{nhst-eq-1}\\
    & > N \frac{\sigma}{\sqrt{\pi}}\frac{1}{2(\frac{2z^2\sigma^2}{\delta^2}+1)} = \Omega(N)\label{nhst-eq-2}
\end{align}
where the equality in (\ref{nhst-eq-1}) follows from plugging-in the NHST sample size denoting $z=z_{(1-\alpha)/2}+z_\beta$, and the last inequality follows from $\sqrt{x+1}-\sqrt{x}>\frac{1}{2\sqrt{x+1}}$ when $x \ge 0$, with $x=n \sigma^2/s^2$.

When using the sample size for the NHST with finite population correction, we can use the same approach where in Equation \eqref{nhst-eq-1} we plug-in $n = \frac{ (z_{1-\alpha/2}+z_{\beta})^2 2 s^2 N}{ (N-1) d^2 + 4 s^2 (z_{1-\alpha/2}+z_{\beta})^2}$.

This results in:
\begin{align}
    E[\Pi|PI] - E[\Pi_D+\Pi_T] & = N \frac{\sigma}{\sqrt{\pi}}\left(1-\frac{\sigma}{\sqrt{\sigma^2+\frac{s^2}{n}}}\right)+\frac{2n\sigma^2}{\sqrt{\pi}\sqrt{\sigma^2+\frac{s^2}{n}}} \\
    & > N \frac{\sigma}{\sqrt{\pi}}\left(1-\frac{\sigma}{\sqrt{\sigma^2+\frac{s^2}{n}}}\right) > N \frac{\sigma}{\sqrt{\pi}}\frac{1}{2(\frac{n\sigma^2}{s^2}+1)}\\
    & = N \frac{\sigma}{\sqrt{\pi}}\frac{1}{2\left(\frac{2 N z^2 \sigma^2}{(N-1)d^2+4z^2 s^2}+1\right)} > N \frac{\sigma}{\sqrt{\pi}}\frac{1}{2\left(\frac{2 N z^2 \sigma^2}{(N-1)d^2}+1\right)} \\
    & > N \frac{\sigma}{\sqrt{\pi}}\frac{1}{2\left(\frac{2 N z^2 \sigma^2}{1/2 N d^2}+1\right)} = N \frac{\sigma}{\sqrt{\pi}}\frac{1}{2\left(\frac{4 z^2 \sigma^2}{d^2}+1\right)} = \Omega(N)
\end{align}
\end{proof}
The last inequality follows from the fact that $1/2N \le N-1$ for $N \ge 2$, which completes the proof.

\begin{cor}[Maximum relative regret]
\label{cor:max-regret}
There is an intermediate value of $\sigma$ for which the profit-maximizing test \& roll achieves the maximum relative regret. 
\end{cor}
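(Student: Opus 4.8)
The plan is to treat the relative regret as a function of the single variable $\sigma$ (holding $N$, $s$, and $\mu$ fixed, with $\mu>0$) and to show it attains its maximum at an interior point by a boundary-behavior argument. Write
\[
g(\sigma) = \frac{E[\Pi|PI] - E[\Pi^*_D + \Pi^*_T]}{E[\Pi|PI]},
\]
where the numerator is the regret in \eqref{eq:exp-regret} evaluated at $n^* = n^*(\sigma)$ from \eqref{eq:NN_sample_size}, and the denominator is $E[\Pi|PI] = N(\mu + \sigma/\sqrt{\pi})$ from \eqref{eq:exp_profit_perfect_information}. Since $n^*(\sigma)$ is a continuous positive function of $\sigma$ and the denominator never vanishes when $\mu>0$, the map $g$ is continuous on $(0,\infty)$. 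Moreover both additive terms of the numerator are strictly positive (the factor $1 - \sigma/\sqrt{\sigma^2 + s^2/n^*} > 0$ and the second term is manifestly positive), so $g(\sigma) > 0$ for every $\sigma \in (0,\infty)$.

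Next I would establish that $g$ vanishes at both ends of $(0,\infty)$. For $\sigma \to \infty$, I can invoke the uniform bound from Proposition \ref{prop:regret}, namely that the regret numerator is at most $3s\sqrt{N}/\sqrt{\pi}$, a constant independent of $\sigma$; since the denominator $N(\mu + \sigma/\sqrt{\pi}) \to \infty$, we obtain $g(\sigma) \to 0$. For $\sigma \to 0^+$ the key observation is that $n^*(\sigma)$ stays bounded: applying $\sqrt{1+x}-1 \le x/2$ to \eqref{eq:NN_sample_size} gives $n^*(\sigma) \le N/6$ for all $\sigma$. Hence the first additive term is at most $N\sigma/\sqrt{\pi}$ and the second, after bounding $\sqrt{\sigma^2+s^2/n^*}\ge s/\sqrt{n^*}$, is at most $\tfrac{2}{\sqrt{\pi}}(N/6)^{3/2}\sigma^2/s$; the numerator is therefore $O(\sigma)\to 0$ while the denominator tends to $N\mu>0$, so $g(\sigma)\to 0$.

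Finally I would conclude by a compactness argument. Because $g$ is continuous and strictly positive on $(0,\infty)$, its supremum $M = \sup_{\sigma > 0} g(\sigma)$ is strictly positive (and finite, being bounded by $3s\sqrt{N}/(\sqrt{\pi}\,N\mu)$). Since $g(\sigma) \to 0$ at both boundaries, there exist $0 < \sigma_1 < \sigma_2 < \infty$ with $g < M/2$ outside $[\sigma_1,\sigma_2]$; any maximizing sequence must eventually lie in this compact interval, on which the continuous $g$ attains its maximum by the extreme value theorem. The maximizer $\sigma^* \in [\sigma_1,\sigma_2]$ is thus an interior finite value with $g(\sigma^*) = M > 0$, exactly the claimed intermediate $\sigma$.

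The main obstacle is the $\sigma \to 0^+$ limit: unlike the large-$\sigma$ side, which follows immediately from the existing regret bound, the small-$\sigma$ side requires verifying that the optimal sample size does not blow up (so the per-customer deployment gain, and hence the total regret, shrinks to zero) and relies on $\mu > 0$ so the perfect-information benchmark stays bounded away from zero. I would also note that the argument delivers existence but not uniqueness of the interior maximizer, which is all the corollary asserts.
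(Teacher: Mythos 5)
Your proof is correct, and it shares the paper's high-level skeleton --- relative regret is positive, vanishes as $\sigma \to 0^+$ and as $\sigma \to \infty$, hence attains its maximum at an intermediate $\sigma$ --- but the execution is genuinely different. The paper substitutes the closed-form $n^*$ from \eqref{eq:NN_sample_size} into \eqref{eq:exp-regret} to obtain an explicit (and rather unwieldy) algebraic expression for the relative regret, then computes both boundary limits by L'H\^{o}pital's rule. You never write that expression down: for $\sigma \to \infty$ you recycle the $3s\sqrt{N}/\sqrt{\pi}$ bound from Proposition \ref{prop:regret} against the diverging denominator $N(\mu + \sigma/\sqrt{\pi})$, and for $\sigma \to 0^+$ you prove the new and useful fact that $n^*(\sigma) \le N/6$ for all $\sigma$, which lets you squeeze the numerator to $O(\sigma)$ while the denominator tends to $N\mu > 0$; you then make explicit the extreme-value-theorem step that the paper leaves implicit. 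Your route is more elementary, makes transparent what drives each limit (bounded regret at one end, bounded sample size at the other), and surfaces the hypothesis $\mu > 0$ that both arguments silently require (if $\mu = 0$ the relative regret does \emph{not} vanish as $\sigma \to 0^+$); the paper's route, in exchange, yields a closed-form relative-regret expression that can be evaluated or plotted directly. One small caveat: the paper establishes the $3s\sqrt{N}/\sqrt{\pi}$ bound only under the condition $N > 4s^2/\sigma^2$, i.e.\ $\sigma > 2s/\sqrt{N}$. This is harmless for your $\sigma \to \infty$ limit, but it means your parenthetical claim that $\sup_{\sigma>0} g(\sigma) \le 3s\sqrt{N}/(\sqrt{\pi}\,N\mu)$ is not justified for small $\sigma$; finiteness of the supremum should instead be deduced from continuity together with the two vanishing limits, which your compactness argument already delivers.
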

\begin{proof}
The relative regret at the optimal sample size equals: \[\frac{\sqrt{\frac{\sqrt{4 N \sigma ^2+9 s^2}-3 s}{\sqrt{4 N \sigma ^2+9 s^2}+s}} \left(s \left(\sqrt{4 N \sigma ^2+9 s^2}-3 s\right)+2 \sigma  \left(\sqrt{N \left(s \left(\sqrt{4 N \sigma ^2+9 s^2}+3 s\right)+N \sigma ^2\right)}-N \sigma \right)\right)}{2 N \sigma  \left(\sqrt{\pi } \mu +\sigma \right)}\]

Using L'Hôpital's rule, the limit as $\sigma \rightarrow 0$ is zero. Similarly, the limit as $\sigma \rightarrow \infty$ is zero. The relative regret is always positive. Consequently, the relative regret achieves a maximum for a value of $\sigma$ which is not zero or infinity.
\end{proof}

\section{Derivations for Beta-Binomial model}
\label{sec:beta-binomial}

 Let the profit $y_{ij}$ from customer $i$ exposed to treatment arm $j$ be $v_j$ with probability $p_j$ and $0$ with probability $1-p_j$, and let $y_j=\frac{\sum_{i=1}^{n_j} y_{ij}}{n_j}$ be the average number of conversions with treatment $j$, when $n_j$ is the number of individuals assigned to treatment $j$. We put a $Beta(\alpha,\beta)$ prior distribution on $p_j$ and denote its pdf as $f(\cdot)$.

\begin{prop}[Beta-Binomial expected profit] If profit $y_{ij}$ from customer $i$ exposed to treatment $j$ is $v_j$ with probability $p_j$ and zero otherwise with priors $p_j \sim Beta(\alpha,\beta)$:
\begin{enumerate}
\item The expected profit in the test stage is $ (n_1 v_1+n_2 v_2) \frac{\alpha}{\alpha+\beta}$
\item The expected profit in the roll stage is:
{\footnotesize
\begin{align}
    & (N-n_1-n_2)  \sum_{y_2=1}^{n_2} \binom{n_2}{y_2}\frac{\Gamma (y_2+\alpha ) \Gamma (n_2-y_2+\beta )}{B(\alpha ,\beta ) \Gamma (n_2+\alpha +\beta )} \cdot \nonumber \\ 
    & \left(\sum_{y_1=\tilde{y}_1}^{n_1} \binom{n_1}{y_1}\frac{\Gamma (y_1+\alpha ) \Gamma (n_1-y_1+\beta )}{B(\alpha ,\beta ) \Gamma (n_1+\alpha +\beta )}  v_1 \frac{\alpha+y_1}{\alpha+\beta+n_1} +\sum_{y_1=0}^{\tilde{y}_1-1} \binom{n_1}{y_1}\frac{\Gamma (y_1+\alpha ) \Gamma (n_1-y_1+\beta )}{B(\alpha ,\beta ) \Gamma (n_1+\alpha +\beta )}  v_2 \frac{\alpha+y_2}{\alpha+\beta+n_2}\right)
\end{align}
}
with
\[\tilde{y}_1 = \alpha\left(\frac{v_2}{v_1}\frac{\alpha+\beta+n_1}{\alpha+\beta+n_2}-1\right)+y_2\left(\frac{v_2}{v_1}\frac{\alpha+\beta+n_1}{\alpha+\beta+n_2}\right)\]
\end{enumerate} 
\label{prop:exp_profit_bb}
\end{prop}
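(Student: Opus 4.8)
The plan is to treat the two parts separately: part~1 is a one-line expectation computation, while part~2 is a conditioning argument whose difficulty is entirely in the bookkeeping.

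For part~1 I would use linearity of expectation. Conditional on $p_1,p_2$, each test customer assigned to arm $j$ contributes expected profit $E[y_{ij}\mid p_j]=v_j p_j$, so the conditional test profit is $n_1 v_1 p_1 + n_2 v_2 p_2$. Taking the prior expectation and using the Beta mean $E[p_j]=\frac{\alpha}{\alpha+\beta}$ gives $(n_1 v_1 + n_2 v_2)\frac{\alpha}{\alpha+\beta}$ at once. For part~2 the first step is to translate the optimal decision rule into a threshold. Since the optimal rule deploys the arm with the highest posterior predictive mean, and observing $y_j$ conversions out of $n_j$ trials yields the $\mathrm{Beta}(\alpha+y_j,\beta+n_j-y_j)$ posterior with predictive mean $v_j\frac{\alpha+y_j}{\alpha+\beta+n_j}$, treatment~1 is deployed exactly when $v_1\frac{\alpha+y_1}{\alpha+\beta+n_1} > v_2\frac{\alpha+y_2}{\alpha+\beta+n_2}$. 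Solving this inequality for $y_1$ produces precisely the threshold $\tilde y_1$ in the statement, so for fixed $y_2$ the decision partitions the range of $y_1$ into $\{y_1 \ge \tilde y_1\}$ (deploy~1) and $\{y_1 < \tilde y_1\}$ (deploy~2).

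The second step is to reduce the a~priori roll profit to an expectation over the prior-predictive distribution of the test counts. I would write it as $(N-n_1-n_2)\,E_{p_1,p_2}E_{y_1,y_2\mid p_1,p_2}[\delta v_1 p_1 + (1-\delta)v_2 p_2]$ with $\delta=\delta(y_1,y_2)$, then interchange the order of integration (Fubini) and condition on the observed counts. Because the two arms' posteriors are independent, the inner conditional expectation collapses to $\delta v_1 \frac{\alpha+y_1}{\alpha+\beta+n_1} + (1-\delta) v_2\frac{\alpha+y_2}{\alpha+\beta+n_2}$, i.e.\ the posterior predictive mean of the deployed arm. This leaves a prior-predictive expectation of that posterior predictive mean. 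The prior-predictive (marginal) law of $y_j$ is the Beta-Binomial pmf $\binom{n_j}{y_j}\frac{\Gamma(y_j+\alpha)\Gamma(n_j-y_j+\beta)}{B(\alpha,\beta)\Gamma(n_j+\alpha+\beta)}$, obtained by integrating the binomial likelihood against the Beta prior, and arm independence factorizes the joint pmf into a product of these. Summing over the feasible values of $y_2$ on the outside and splitting the inner $y_1$ sum at $\tilde y_1$ (with the two possible deployed payoffs) assembles exactly the claimed double sum.

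The main obstacle is not analytic depth --- there are no Gaussian-integral identities here as in the symmetric case --- but careful bookkeeping, and I expect three points to need the most care. First, solving the deployment inequality cleanly so the algebra matches the stated $\tilde y_1$ term by term. Second, handling the fact that $\tilde y_1$ is generally non-integer, so $\sum_{y_1 \ge \tilde y_1}$ must be read as a sum over integer $y_1$ satisfying $y_1 > \tilde y_1$, with the boundary convention fixed consistently. Third, rigorously justifying the interchange that replaces the expectation over parameters of realized roll profit with the prior-predictive expectation of the posterior predictive mean, which relies on Fubini together with the posterior independence of the two arms. Once these are pinned down, the rest is substitution of the Beta-Binomial pmf and the posterior mean.
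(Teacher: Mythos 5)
Your proposal is correct and follows essentially the same route as the paper's proof: part~1 by linearity and the Beta mean, and part~2 by converting the posterior-mean decision rule into the threshold $\tilde y_1$, applying Fubini, and reducing each integral via Bayes' rule to the Beta-Binomial prior-predictive pmf times the posterior mean $\frac{\alpha+y_j}{\alpha+\beta+n_j}$ of the deployed arm. The boundary issue you flag is handled in the paper as a tie-breaking argument (showing the result is invariant to which arm is chosen when the posterior expected profits are exactly equal), but the substance is identical.
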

\begin{proof}
To prove the first item, the expected profit in the test stage is:
\begin{equation}
    E[\pi_T] = \int_{p_1} \sum_{y_1=0}^{n_1} v_1 y_1 Pr(y_1|p_1) f(p_1) dp_1 + \int_{p_2} \sum_{y_2=0}^{n_2} v_2 y_2 Pr(y_2|p_2) f(p_2) dp_2
\end{equation}
 
Because $\sum_{y_j=0}^{n_j} y_j Pr(y_j|p_j) = n_j p_j$, then $\int_{p_1} \sum_{y_1=0}^{n_1} y_1 Pr(y_1|p_1) f(p_1) dp_1 = n_j \frac{\alpha}{\alpha+\beta}$, and plugging this in yields the expression in in the proposition.

The prove the second item, the \emph{a priori} expected profit in the roll stage is:
{\footnotesize
\begin{equation}
(N-n_1-n_2)\int_{p_2} \int_{p_1} \sum_{y_1=1}^{n_1} \sum_{y_2=1}^{n_2} \left[\delta(y_1,y_2)  p_1 v_1 + (1-\delta(y_1,y_2)) p_2 v_2 \right] Pr(y_2|p_2) Pr(y_1|p_1) f(p_1) f(p_2) dp_1 dp_2
\end{equation}
}

Focusing on the first additive (the second will be symmetric because of the symmetric prior), it can be written as:
\begin{equation}
\label{eq:bb-prof-dep}
(N-n_1-n_2) v_1 \int_{p_2} \int_{p_1} \sum_{y_1=1}^{n_1} \sum_{y_2=1}^{n_2} \delta(y_1,y_2)  Pr(y_2|p_2) Pr(y_1|p_1) p_1 f(p_1) f(p_2) dp_1 dp_2
\end{equation}

The optimal decision rule $\delta(y_1,y_2)$ is to pick the treatment with the highest expected posterior profit $v_j E[p_j|y_j]=v_j \frac{\alpha+y_j}{\alpha+\beta+n_j}$, resulting from the fact that the profits are Binomially distributed with a Beta prior.
Hence, by denoting $\tilde{y}_1 = \alpha\left(\frac{v_2}{v_1}\frac{\alpha+\beta+n_1}{\alpha+\beta+n_2}-1\right)+y_2\left(\frac{v_2}{v_1}\frac{\alpha+\beta+n_1}{\alpha+\beta+n_2}\right)$, and by applying Fubini's theorem, we can rewrite \eqref{eq:bb-prof-dep} as:
\begin{equation}
\label{eq:bb-prof-dep2}
(N-n_1-n_2) v_1 \sum_{y_2=1}^{n_2} \sum_{y_1=\tilde{y}_1}^{n_1}  \int_{p_2} Pr(y_2|p_2) f(p_2) dp_2 \int_{p_1} p_1 f(p_1) Pr(y_1|p_1) dp_1
\end{equation}

The derivation above assumes that if the expected posterior profit of both treatments is equal, then treatment $1$ is chosen as a tie-breaking rule. We will show that this tie-breaking rule does not change the result if we opt for another rule (e.g., pick treatment 2 if tied, or pick one randomly).

Using Bayes rule, $Pr(y_j|p_j)f(p_j) = Pr(y_j) f(p_j|y_j)$. This implies that:
\begin{align}
    \int_{p_2} Pr(y_2|p_2) f(p_2) dp_2 & = Pr(y_2) \\
    \int_{p_1} p_1 f(p_1) Pr(y_1|p_1) dp_1 & = Pr(y_1) \frac{\alpha+y_1}{\alpha+\beta+n_1}
\end{align}
The second equation stems from the fact that $f(p_1|y_1)$ is the pdf of a $Beta(\alpha+y_1,\beta+n_1-y_1)$ distribution.

We can calculate $Pr(y_j)$ as:
\begin{equation}
    Pr(y_j) = \int_{p_j=0}^1 Pr(y_j|p_j) f(p_j) dp_j =\binom{n_j}{y_j}\frac{\Gamma (y_j+\alpha ) \Gamma (n_j-y_j+\beta )}{B(\alpha ,\beta ) \Gamma (n_j+\alpha +\beta )}
\end{equation}

Plugging into \eqref{eq:bb-prof-dep2}, the total roll stage profit is:
{\footnotesize
\begin{align}
    & (N-n_1-n_2)  \sum_{y_2=1}^{n_2} \binom{n_2}{y_2}\frac{\Gamma (y_2+\alpha ) \Gamma (n_2-y_2+\beta )}{B(\alpha ,\beta ) \Gamma (n_2+\alpha +\beta )} \cdot \nonumber \\ 
    & \left(\sum_{y_1=\tilde{y}_1}^{n_1} \binom{n_1}{y_1}\frac{\Gamma (y_1+\alpha ) \Gamma (n_1-y_1+\beta )}{B(\alpha ,\beta ) \Gamma (n_1+\alpha +\beta )}  v_1 \frac{\alpha+y_1}{\alpha+\beta+n_1} +\sum_{y_1=0}^{\tilde{y}_1-1} \binom{n_1}{y_1}\frac{\Gamma (y_1+\alpha ) \Gamma (n_1-y_1+\beta )}{B(\alpha ,\beta ) \Gamma (n_1+\alpha +\beta )}  v_2 \frac{\alpha+y_2}{\alpha+\beta+n_2}\right)
\end{align}
}
If there is a tie such that $v_1 \frac{\alpha+y_1}{\alpha+\beta+n_1} = v_2 \frac{\alpha+y_2}{\alpha+\beta+n_2}$, it does not matter if we take the left or the right additive within the parenthesis. Hence, any tie-breaking rule will yield an equivalent profit.
\end{proof}

To design a test for binomial experiment, the expected profit from Proposition \ref{prop:exp_profit_bb} can be numerically optimized, using a discrete optimization heuristic. However, since the normal-normal model is more computationally convenient, it can be used to approximate the beta-binomial using the usual binomial approximation: $\mu=\frac{\alpha}{\alpha + \beta}$, $s=\sqrt{\mu(1-\mu)}$ and $\sigma=\sqrt{\frac{\alpha\beta}{(\alpha + \beta)^2 (\alpha + \beta + 1)}}$. Figure \ref{fig:nn_v_bb} shows that this approximation results in nearly the same sample size except when the response rate $\mu$ is close to zero (or equivalently close to 1) and the prior is relatively informative (prior precision = $\alpha + \beta > 100$) . 

\begin{figure}
\centering
\includegraphics[page=2, width=\textwidth]{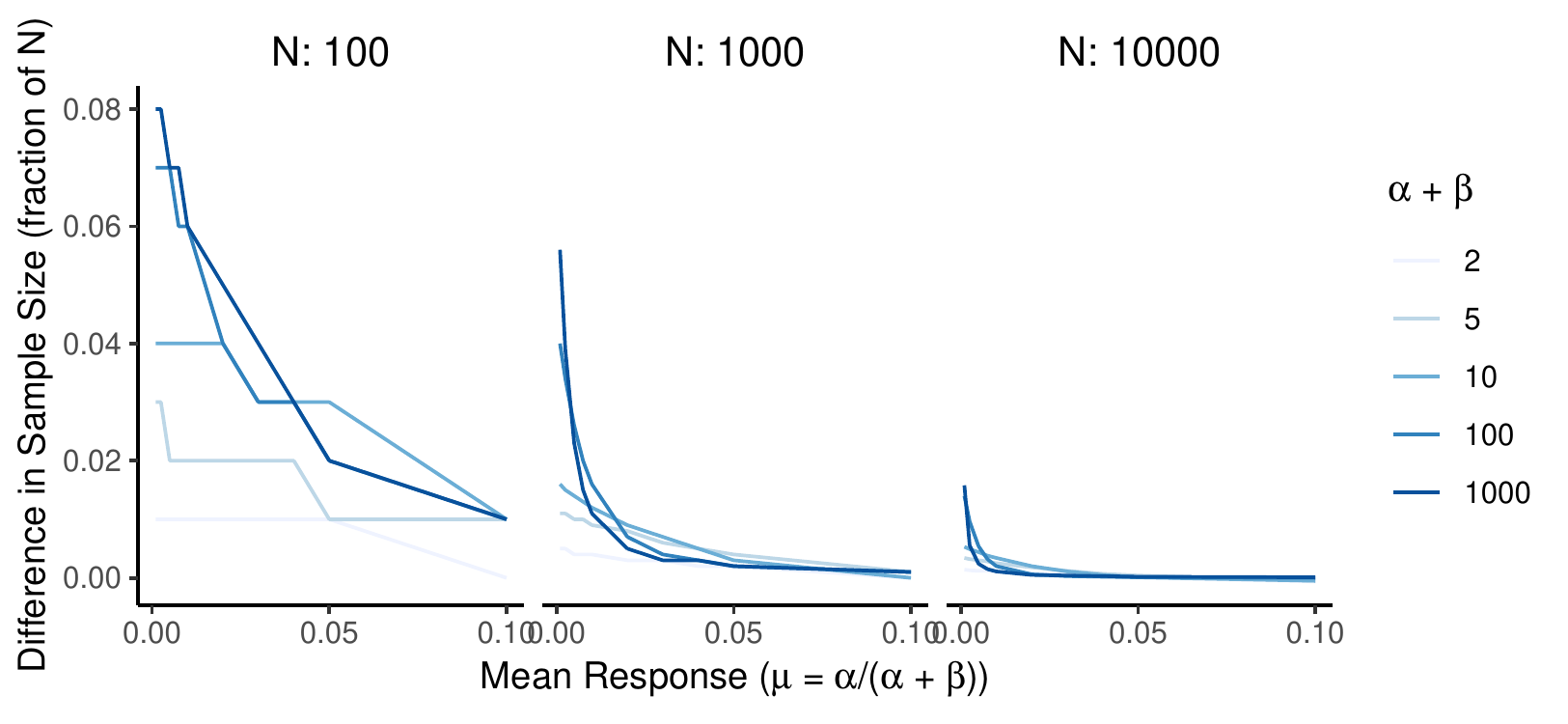}
\caption{Comparison of optimal sample sizes computed exactly using the Beta-Binomial model versus the Normal-Normal approximation for various values of mean response rate ($\mu = \frac{\alpha}{\alpha + \beta}$) and prior precision ($ \alpha + \beta$). }
\label{fig:nn_v_bb}
\end{figure}

\section{Asymmetric Tests}
\label{sec:asymmetric_test_details}

\subsection{Incumbent Challenger Test}
In an incumbent/challenger test more is known about one treatment than the other. Denote $\sigma_2 = c \sigma_1$ with $c>1$. To analyze this scenario in closed form, we will assume that $\mu_1=\mu_2$ and that $s_1=s_2=s$, although the solution can be found numerically for any set of values. Because the uncertainty is larger for treatment $2$, it is always the case that $n_2^*>n_1^*$ in an incumbent/challenger test. When the population size is small enough, it is too wasteful to experiment with treatment $1$, and the test will only include exposures to treatment $2$. After this test phase, comparison will be made to the prior on treatment $1$ to select which treatment to deploy. This is shown formally in Proposition \ref{prop:incumbent}: 

\begin{prop}[Incumbent/Challenger sample sizes]
\label{prop:incumbent}
In an asymmetric test when treatment $1$ is an incumbent and treatment $2$ is a challenger such that $\mu_1=\mu_2$, $s_1=s_2=s$ and $\sigma_2 = c \cdot \sigma_1$ with $c>1$:
\begin{enumerate}
    \item The optimal sample sizes are:
    \begin{align}
    n_1^* & = \frac{s \left(\sqrt{2 c^2 \left(c^2+1\right) N \sigma_1^2+\left(2 c^4+5 c^2+2\right) s^2}-c s(1+2 c^2)\right)}{2 \left(c^3+c\right) \sigma_1^2} \\
    n_2^* & = \frac{s \left(c \sqrt{2 c^2 \left(c^2+1\right) N \sigma_1^2+\left(2 c^4+5 c^2+2\right) s^2}-\left(c^2+2\right) s\right)}{2 c^2 \left(c^2+1\right) \sigma_1^2}.
\end{align}
    \item $n_2^*>n_1^*$ for any value of $N$, $s$, $c>1$ and $\sigma$.
    \item $n_2^*>0$, for any value of $N$, $s$, $c>1$ and $\sigma$. $n_1^*>0 \iff N>\frac{\left(2 c^4-c^2-1\right) s^2}{c^2 \sigma_1^2}$
\end{enumerate}
\end{prop}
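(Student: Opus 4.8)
The plan is to collapse the two-variable optimization to one variable using the symmetry condition $\mu_1=\mu_2$, and then solve the first-order conditions exactly. Starting from the test-stage profit \eqref{eq:NNasym_exp_profit_test} and the roll-stage profit \eqref{eq:NNasym_exp_profit_deploy}, I substitute $\mu_1=\mu_2=\mu$, $s_1=s_2=s$ and $\sigma_2=c\sigma_1$. Because the difference $e=\mu_2-\mu_1$ is zero, the $\Phi$-term disappears and $\phi(0)=1/\sqrt{2\pi}$, so the total expected profit collapses to $N\mu + (N-n_1-n_2)\,v/\sqrt{2\pi}$, where $v^2 = \frac{\sigma_1^4 n_1}{\sigma_1^2 n_1 + s^2} + \frac{c^4\sigma_1^4 n_2}{c^2\sigma_1^2 n_2 + s^2}$. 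Since $N\mu$ is constant, maximizing profit is equivalent to maximizing $h(n_1,n_2)=(N-n_1-n_2)\,v$.

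Next I would write the two first-order conditions. Each derivative simplifies to $\partial v^2/\partial n_j = \frac{\sigma_j^4 s^2}{(\sigma_j^2 n_j + s^2)^2}$ (with variances $\sigma_1^2$ and $c^2\sigma_1^2$), and $\partial h/\partial n_j=0$ gives $2v^2 = (N-n_1-n_2)\,\partial v^2/\partial n_j$ for each $j$. The crucial step is equating the two right-hand sides: the factor $(N-n_1-n_2)$ cancels, and taking a square root yields the \emph{linear} relation $c^2\sigma_1^2 n_2 + s^2 = c^2(\sigma_1^2 n_1 + s^2)$. This already proves item~2, since it rearranges to $n_2-n_1 = (c^2-1)s^2/(c^2\sigma_1^2)>0$ for $c>1$ (and in the corner case $n_1^*=0$ noted below, $n_2^*>0=n_1^*$ trivially). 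To get the closed forms I introduce $u_1=\sigma_1^2 n_1 + s^2$, so the linear relation reads $u_2:=c^2\sigma_1^2 n_2 + s^2 = c^2 u_1$; this also collapses $v^2$ to $\sigma_1^2(1+c^2)-2\sigma_1^2 s^2/u_1$. Substituting into either first-order condition turns it into the quadratic $2c^2(1+c^2)u_1^2 - 2c^2 s^2 u_1 - [c^2\sigma_1^2 N s^2 + (c^2+1)s^4]=0$. Taking the positive root, whose discriminant factors as $4c^2 s^2[2c^2(c^2+1)N\sigma_1^2 + (2c^4+5c^2+2)s^2]$, and back-substituting $n_1=(u_1-s^2)/\sigma_1^2$ and $n_2=(c^2 u_1 - s^2)/(c^2\sigma_1^2)$ recovers the stated formulas of item~1.

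For item~3 I would argue directly from the closed forms. Writing $R=\sqrt{2c^2(c^2+1)N\sigma_1^2 + (2c^4+5c^2+2)s^2}$, positivity of $n_2^*$ is equivalent to $c^2 R^2 > (c^2+2)^2 s^2$; dropping the nonnegative $N$-term and setting $x=c^2$, this reduces to $x(2x^2+5x+2)-(x+2)^2 = 2(x+2)(x-1)(x+1)>0$, which holds for all $c>1$, so $n_2^*>0$ unconditionally. Similarly $n_1^*>0$ is equivalent to $R^2 > c^2 s^2 (1+2c^2)^2$; expanding and factoring the coefficient of $s^2$ as $2(2c^2+1)(c^2-1)(c^2+1)$ and cancelling $2(c^2+1)$ yields exactly the threshold $N>(2c^4-c^2-1)s^2/(c^2\sigma_1^2)$.

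The main obstacle is bookkeeping rather than conceptual: solving the quadratic in $u_1$, simplifying the discriminant, and checking that the back-substituted expressions match the stated closed forms is lengthy and error-prone. The change of variables $u_j=\sigma_j^2 n_j + s^2$ is what keeps this tractable, since it makes the first-order conditions symmetric and gives the clean relation $u_2=c^2 u_1$. For completeness I would also note that the interior critical point is the global maximum because $h$ vanishes on the boundary of the feasible set $\{n_1,n_2\ge 0,\ n_1+n_2\le N\}$ and is strictly positive in its interior; when the item~3 threshold fails, the formula's $n_1^*$ is non-positive and the true optimum lies on the face $n_1=0$, matching the paper's remark that for very small populations it is too costly to test the incumbent at all.
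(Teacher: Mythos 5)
Your proof is correct and follows essentially the same route as the paper: plug the incumbent/challenger parameters into the asymmetric profit formula to get $N\mu + (N-n_1-n_2)v/\sqrt{2\pi}$, write the two first-order conditions, equate/divide them to obtain the linear relation $c^2\sigma_1^2 n_2 + s^2 = c^2(\sigma_1^2 n_1 + s^2)$ (which immediately gives item~2), and substitute back to reduce to a single quadratic whose positive root yields the closed forms; your explicit factorizations for item~3 fill in algebra the paper merely asserts, and your $u_j$ change of variables is the same substitution with cleaner bookkeeping.

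One caveat on your closing remark: the claim that $h=(N-n_1-n_2)v$ vanishes on the boundary of $\{n_1,n_2\ge 0,\ n_1+n_2\le N\}$ is false — on the face $n_1=0$ with $0<n_2<N$ one has $v^2 = c^4\sigma_1^4 n_2/(c^2\sigma_1^2 n_2+s^2)>0$, so $h>0$ there. Indeed this is exactly why the optimum can migrate to that face when the item~3 threshold fails, which contradicts your stated premise; the corner-case conclusion you draw is right, but the boundary argument as written does not establish global optimality of the interior critical point (the paper does not address this either).
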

\begin{proof}
Plugging $\mu_1=\mu_2=\mu$, $s_1=s_2=s$ and $\sigma_2 = c \sigma_1$ into the expected profit derived in Proposition \ref{prop:exp-prof}, the expected profit in an incumbent/challenger experiment is $N \mu + (N-n_1-n_2) \frac{v}{\sqrt{2 \pi}}$ with $v=\sqrt{\frac{\sigma_1^4}{\sigma_1^2+s^2/n_1}+\frac{c^4 \sigma_1^4}{c^2 \sigma_1^2+s^2/n_2}}$.

After simplifying and re-arranging, the first order conditions for finding the optimal $n_1$ and $n_2$ are:
\begin{align}
-\frac{s^2 (-N+n_1+n_2)}{\left(n_1 \sigma_1^2+s^2\right)^2} & = 2 \left(\frac{c^4 n_2}{c^2 n_2 \sigma_1^2+s^2}+\frac{n_1}{n_1 \sigma_1^2+s^2}\right) \\
\frac{c^4 s^2 (N-n_1-n_2)}{\left(c^2 n_2 \sigma_1^2+s^2\right)^2} & = 2 \left(\frac{c^4 n_2}{c^2 n_2 \sigma_1^2+s^2}+\frac{n_1}{n_1 \sigma_1^2+s^2}\right) \label{eq:foc}
\end{align}

Dividing the two equations and solving for $n_1$, the only possible solution such that $n_1^*>0$ for some values is $n_1^*=\frac{c^2 n_2 \sigma_1^2-c^2 s^2+s^2}{c^2 \sigma_1^2}$. Plugging this into Equation \eqref{eq:foc} yields the expression in the proposition, proving the first item.

To prove the second item, the inequality $n_2^*-n_1^*>0$ can be written as:
\begin{equation}
    (n_2^*-n_1^*)\frac{2\sigma_1^2 c^2 (1+c^2)}{s} = 2 \left(c^4-1\right) s >0
\end{equation}
which always holds because $c>1$.

To prove the third item, we solve for $n_2^*>0$, which holds for the described parameter values, and $n_1^*>0$ which holds if and only if $N > \frac{\left(2 c^4-c^2-1\right) s^2}{c^2 \sigma_1^2}$.
\end{proof}

\section{Thompson Sampling for the Normal-Normal model}
\label{sec:thompson_sampling_details}

Thompson sampling \citep{Thompson1933} has recently become the prominent heuristic for solving multi-armed bandit problems, due to its superior performance and ease of implementation \citep{scott2010, schwartzBradlowFader2016WP}. Here we describe the Thompson sampling algorithm we use, which is the standard implementation applied to the normal symmetric model.

Opportunities to apply the treatment are assumed to come in one at a time for each $i = 1 \ldots N$. Under the symmetric normal model, treatment $j$ generates outcomes $y_{ji}$ drawn from $\mathcal{N}(m_j,s^2)$. 

The algorithm is initialized with with priors $m_j \sim \mathcal{N}(\mu_j(0),\sigma_j^2(0))$. For each $i$, the algorithm makes a dynamic decision whether to deploy treatment $1$ or treatment $2$ as follows:
\begin{enumerate}
\item Draw a mean $m_1(i)$
from $\mathcal{N}(\mu_1(i-1), \sigma_1^2(i-1))$ and $m_2(i)$
from $\mathcal{N}(\mu_2(i-1), \sigma_2^2(i-1))$.

\item If $m_1(i)>m_2(i)$, treatment $1$ is deployed. Otherwise treatment 2 is deployed.

\item Either $y_{1i}$ or $y_{2i}$ is observed based on the decision. In simulation $y_{ji}$ is drawn from its true distribution $\mathcal{N}(m_j, s^2)$.

\item The hyperparameters $\mu_j(i)$ and $\sigma_j^2(i)$ are updated given the new data. If treatment $j$ was not deployed, the hyperparameters at time $i$ equal those at time $i-1$. If the treatment was deployed, the hyperparameters are calculated as the posterior of the normal distribution, with the observed outcome used as data and the hyperparameters from period $i-1$ used for the prior.

\end{enumerate}

Thus, treatments are probabilistically sampled according to the current probability that each treatment is best, i.e., treatment 1 is sampled at the rate of $Pr(\mu_1(i) > \mu_2(i))$. This rule favors treatments with higher expected response and, as a result, the algorithm will quickly converge to the best-performing treatment as data accumulates. However, it also is also more likely to sample treatments with higher uncertainty, because of the high potential upside for those treatments, which helps to avoid converging to the wrong treatment.  

The explicit explore versus exploit trade-off in a multi-armed bandit is similar to the tradeoff between the size of the test sample and the remaining population in a test \& roll, albeit more dynamic. The dynamic approach works better when opportunities to apply the treatment are spread out over time and the desired response is immediately available (e.g., website tests where the response is a click), but can be difficult to execute when the response is not immediately observable (e.g., sales) or when the treatments are sent out in batches (e.g., direct mail). 

\cite{agrawal2013} have shown that the regret from Thompson sampling with Normal outcomes and Normal priors is $O(\sqrt{N})$. This has been shown before to be the best achievable regret for any dynamic multi-armed bandit approach when compared to having perfect information, and hence Thompson sampling is an ideal benchmark for comparison.

\section{Application Details}
\label{sec:estimating_priors}

If a firm has data on response to prior marketing treatments that are similar to those that will be tested, this data can be used to estimate the distribution of mean response needed to compute the test \& roll sample size. For example, if the firm has past data on response $y_{ij}$ for each customer $i$ in each of $j = 1, \dots J$ previous marketing campaigns, then we can fit a hierarchical model: 
\begin{align}
 y_{ij} &\sim \mathcal{N}(m_j, s) \textnormal{ for observations } i = 1, \dots N_j \textnormal{ in campaigns } j = 1, \dots J\\
 m_j &\sim \mathcal{N}(\mu, \sigma) \textnormal{ in campaigns } j = 1, \dots J
 \label{eq:hb_model}
\end{align}
Estimates of $\mu$ and $\sigma$ can be plugged into \eqref{eq:NN_sample_size} to compute the test \& roll sample size. For binary responses, with small samples, we could estimate a similar beta-binomial model. 

The campaigns $j$ can be defined by a particular period of time when a marketing treatment was in place and the response was stable, such as response rates to direct marketing campaigns or customers visiting a website in a particular month. The key assumption is that these prior campaigns represent the range of likely mean responses for the treatments in the test that is being planned. We provide more details for specific applications below. 

\subsection{Website Testing Example} 
\label{sec:website-test-data}

The data on website tests is adopted from \cite{berman2018} and contains the results for 2,101 A/B tests. These tests were conducted across a wide variety of pages and websites. For each test we observe the number of times the page was served with each of the two variations and the total number of times a user clicked on the page for each variation. Our goal is to use this data to estimate the range of lifts in click rates that one might expect from a website test and then use this to size a test \& roll experiment. 

Figure \ref{fig:website_dist} displays the distribution of observed lift values between -0.6 and 0.6. This range contains 2,084, or 99.15\% of the experiments. The distribution is long tailed with a small number of experiments having higher lifts than 0.6. The interquantile [1\%, 99\%] lift range is [-.213, .327] with a mean of .112 and a median of .0015. For treatment effects, the range is [-.10, .16] with a mean of 0.005 and a median of 0.001. The sample sizes range from 100 to 17.4M, with an interquantile [1\%, 99\%] range of [116, 903,850], a mean of 574,474 and median of 3,864 users per treatment.

\begin{figure}[ht]
    \centering
    \includegraphics[width=0.7\textwidth]{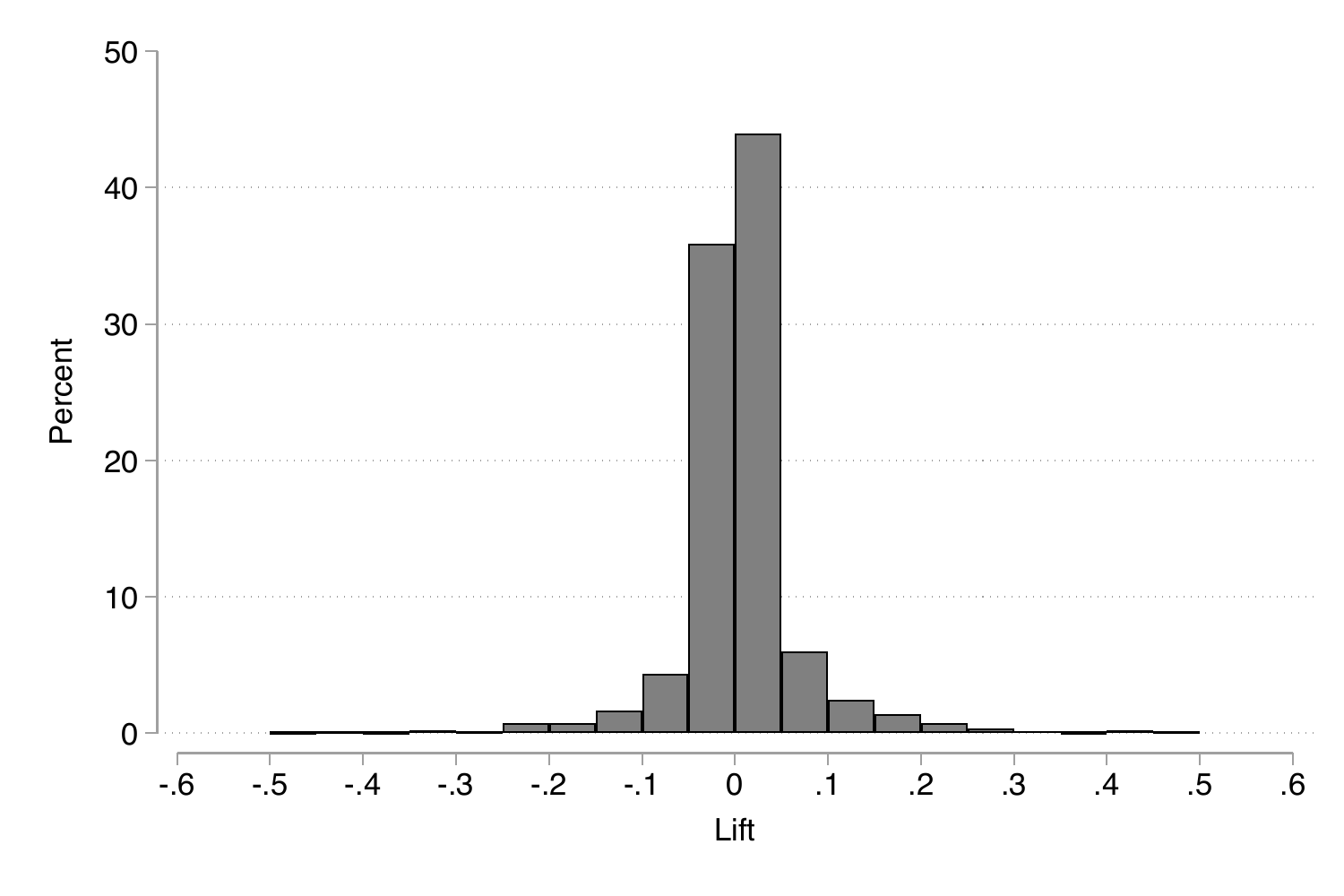}
    \caption{Distribution of lift values for experiments. Adopted from \cite{berman2018}}.
    \label{fig:website_dist}
\end{figure}

Because these tests were conducted across many websites with a wide range of click rates, there tends to be correlation in the click rate between the two arms in the same experiment. To account for this, we assume that each experiment $k$ has it's own mean click rate $t_k$ and assume that the means for the treatment arms within the experiment are distributed normal around the click rate for the experiment as follows:
\begin{align}
y_{ijk} &\sim \mathcal{N}(m_{jk}, s)\\
\label{eq:website_model}
m_{jk} &\sim \mathcal{N}(t_k, \sigma) \\
t_k &\sim \mathcal{N}(\mu, \omega)
\end{align}
Because the data is binary, we follow the binomial approximation and assume $s = m_{1k}(1-m_{1k})$, reducing the number of estimated population parameters to three. The model is estimated using the HMC algorithm implemented in Stan \citep{stan2018} with diffuse priors on the hyper-parameters and the estimates are reported in Table \ref{tab:website_model_estimates}.
\linespread{1.1}
\begin{table}[ht]
\caption{Model estimates for meta-analysis of website tests provide an estimate of the distribution of mean response to be used in planning future tests with similar treatments and targeted populations.}
\label{tab:website_model_estimates}
\centering
\begin{tabular}{ccccc}
\hline
 & mean & sd & 2.5\%-tile & 97.5\%tile\\
\hline
$\mu$ & 0.676 & 0.004 & 0.667 & 0.685\\
$\sigma$ & 0.030 &  0.001 & 0.029 & 0.031\\
$\omega$ & 0.199 & 0.003 & 0.193 & 0.206\\
\hline
\end{tabular}
\end{table}
\linespread{1.5}

In the empirical model, $\omega$ captures the variation in mean response \emph{across} experiments, while $\sigma$ captures the variation between arms \emph{within} an experiment. In sizing a test \& roll experiment following \eqref{eq:NN_sample_size}, we are interested in the potential differences between arms within a single experiment, so we use the estimate of $\sigma$ and ignore $\omega$. In addition, we assume $s=\mu(1-\mu)$, but if the experimenter has other information about the likely click rate for these particular web pages, then $s$ can be appropriately adjusted or conservatively set at 0.25, while still using $\sigma$ as an estimate of the range of mean responses expected for treatments within an experiment. 

\subsection{Display Ad Testing Example}

We illustrate how ``Advertiser 1" in \citet{LewisRao2015} might obtain the parameters $\mu$ and $\sigma$ in order to find the profit-maximizing sample size for a new test \& roll with treatments that are expected to perform similarly to experiments 1.1, 1.2, 1.3, 1.5 and 1.6 reported in Table \ref{tab:display_test_data}.  We eliminated experiment 1.4 because it had a substantially different media cost and response rate for the control group versus the other experiments and appears to be targeting customers with higher baseline purchase propensity.

\linespread{1.1}
\begin{table}[ht]
\caption{Reported mean and standard deviation for control group in display advertising tests reported by \citet[Table I]{LewisRao2015}. These are used to estimate mean and variance for  display advertising response for ``Advertiser 1"'s typical campaign}
\label{tab:display_test_data}
\centering
\begin{tabular}{c c c c}
\hline
Test & Mean ($\bar{y}_j$) & Pooled St. Dev  ($\hat{s}$) & Group Size ($n$)\\
\hline
1.1 & 9.49 & 94.28 & 300,000 \\
1.2 & 10.50 & 111.15 & 300,000\\
1.3 & 4.86 & 69.98 & 300,000\\
1.5 & 11.47 & 111.37 & 300,000\\
1.6 & 17.62 & 132.15 & 300,000\\
\hline
\end{tabular}
\end{table}
\linespread{1.5}

Using the data in Table \ref{tab:display_test_data}, we estimate the following hierarchical model for the mean response in the control group reported for each experiment $j$. 
\begin{align}
\bar{y}_{j} &\sim \mathcal{N}\left(m_{j}, \frac{\hat{s}}{\sqrt{n}}\right)\\
m_{j} &\sim \mathcal{N}(\mu, \sigma)
\end{align} 
where the sampling distribution for $y_{ijk}$ in equation \ref{eq:website_model} has been replaced with $\bar{y}_j$, since we do not have access to the user-level data. The estimates of $\mu$ and $\sigma$ reported in Table \ref{tab:display_test_metaanalysis} are used in designing a new test \& roll for Advertiser 1. $s$ is estimated as the average of $s_j$ across the 5 experiments, which is 103.77. 

\linespread{1.1}
\begin{table}
\centering
\caption{Model estimates for meta-analysis of display advertising tests provide an estimate of the distribution of mean response to be used in planning future tests with similar treatments and targeted populations.}
\label{tab:display_test_metaanalysis}
\begin{tabular}{ccccc}
\hline
Parameter & mean & sd & 2.5\%-tile & 97.5\%-tile \\
\hline
$\mu$ & 10.36 & 1.99 & 6.16 & 14.17\\
$\sigma$ & 4.40 & 1.17 & 2.63 & 7.17\\
\hline
\end{tabular}
\end{table}
\linespread{1.5}

Because we are estimating the variance in mean response $\sigma$ from just 5 experiments, the posterior of $\sigma$ is relatively wide.  As can be seen from \eqref{eq:NN_sample_size}, the profit-maximizing sample size will be largest when $\sigma$ is smallest.  Conservatively, one might use the posterior 2.5\%-tile for $\sigma$, instead of the posterior mean. This results in a profit-maximizing sample size of 18,486, still far smaller than that recommended for a hypothesis test.

\subsection{Catalog Holdout Testing Example}
\label{sec:catalog_details}

The catalog holdout data describe 30 catalog holdout tests conducted in October 2013 through March 2014. In each month, 6 tests were conducted using the same print catalog and different targeted populations. This data is provided by the same retailer as in \citet{ZantedeschiFeitBradlow2017}, but is a different sample of tests. For each customer $i$ in each test $k$, we observe the all-channel sales $y_{ijk}$ for one month after the catalog is delivered. The sample sizes and holdout rates for these tests vary with a [10\%, 90\%] interquartile range for the sample size of [346.7, 1395.5] with a mean of 658.3 and a median of 437.0. The holdout rates also varied widely with a range of [1.1\%, 95.1\%], a median of 5.4\% and a mean of 21.0\%. 

The distribution of the estimated treatment effects are shown in Figure \ref{fig:catalog_tes}. The interquartile range for the point estimates of the treatment effects is [-10.77, 39.15] with a median of 6.23 and a mean of 11.34 (all \$US). Lifts can not be computed for 7 of the tests because no purchases were made in the control group, but the median lift is 1.48 and the 10th percentile is -0.549.  The 1-month purchase amounts for individual customers have a median of 0, a mean of 43.64 and a 90th percentile of 113.00. 

\linespread{1.1}
\begin{figure}[ht]
    \centering
    \includegraphics[page=2, width=0.45\textwidth]{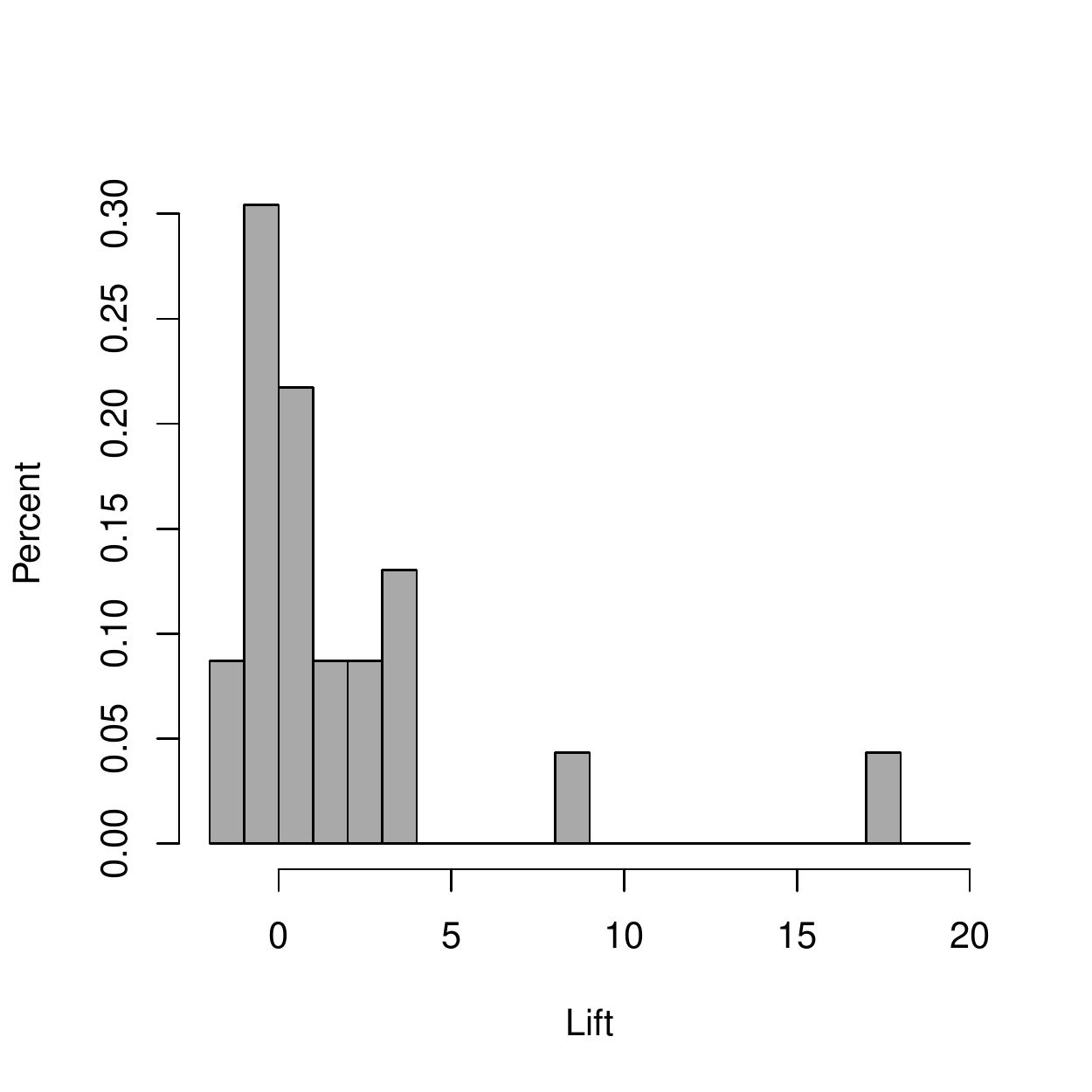}
    \caption{Distribution of catalog holdout test treatment effects ($\overline{y}_2 - \overline{y}_1$)}
    \label{fig:catalog_tes}
\end{figure}
\linespread{1.5}

Individually, the catalog holdout tests have very imprecise estimates for response due to small sample size and high noise in the data. The hierarchical model is particularly valuable in pooling information across the tests and propagating uncertainty due to small sample sizes. We fit a model similar to that used for the website tests, except that we allow for $\mu_1 \ne \mu_2$ and $\sigma_1 \ne \sigma_2$, because unlike for the website tests, there is a clear distinction between the treated and holdout conditions. The model we fit is: 
\begin{align}
 y_{i1k} &\sim \mathcal{N}(m_{1k}, s_1)  \textnormal{ for customers in control group} \\
 y_{i2k} &\sim \mathcal{N}(m_{2k}, s_2)  \textnormal{ for customers in treatment group} \\
 m_{1k} &\sim \mathcal{N}(t_k, \sigma_1)  \\
 m_{2k} &\sim \mathcal{N}(t_k + \Delta, \sigma_2)\\
 t_k &\sim \mathcal{N}(\mu_1, \omega)
\end{align}
By modeling the overall response rate for the experiment $t_k$, we allow for the different targeted populations to have different response rates and account for the correlation in response within experiments. In planning a new test, we focus on the the variation in response rates within the experiment, as estimated by $\sigma_1$ and $\sigma_2$. 

Samples from the posterior are obtained using the HMC algorithm implemented in Stan with uniform priors on the hyper-parameters. The posterior means for $\mu_1$, $\Delta$ = $\mu_2 - \mu_1$, $\sigma_1$ and $\sigma_2$ reported in Table \ref{tab:catalog_model_estimates} are used as point estimate to compute the asymmetric test \& roll sample size. 
\linespread{1.1}
\begin{table}[ht]
\centering
\caption{Model estimates for meta-analysis of catalog holdout tests provide an estimate of the distribution of mean response to be used in planning future tests.}
\label{tab:catalog_model_estimates}
\begin{tabular}{ccccc}
\hline
 & mean & sd & 2.5\%-tile & 97.5\%tile\\
\hline
$s_1$ & 87.69 & 1.21 & 85.41 & 90.06\\
$s_2$ & 179.36 & 0.97 & 177.46 & 181.24\\ 
$\mu_1$ & 19.39 & 7.13 & 5.32 & 33.17\\
$\Delta$ & 10.67 & 6.19 & -1.30 & 22.77 \\
$\sigma_1$ & 20.97 & 5.85 & 8.81 & 32.25\\
$\sigma_2$ & 13.48 & 5.88 & 4.01 & 26.78\\
$\omega$ & 27.25 & 5.18 & 18.27 & 38.57\\
\hline
\end{tabular}
\end{table}
\linespread{1.5}

We also estimated a version of the model where $s_1$ was constrained to be the same as $s_2$ and used these estimates to show that unequal group sizes can arise from the priors (unlike in null hypothesis testing). The resulting estimates are reported in Table \ref{tab:catalog_model_estimates_2}.
\linespread{1.1}
\begin{table}[ht]
\caption{Model estimates for catalog holdout tests assuming $s_1 = s_2 = s$.}
\label{tab:catalog_model_estimates_2}
\centering
\begin{tabular}{ccccc}
\hline
 & mean & sd & 2.5\%-tile & 97.5\%tile\\
\hline
s & 170.17 & 0.86 & 168.48 & 171.91\\
$\mu_1$ & 23.32 & 8.38 & 6.64 & 40.06\\
$\Delta$ & 6.39 & 7.48 & -7.78 & 21.93\\
$\sigma_1$ & 18.54 & 7.08 & 5.55 & 33.62\\
$\sigma_2$ & 9.79 & 5.98 & 2.37 & 23.73\\
$\omega$ & 28.75 & 4.87 & 20.20 & 39.15\\
\hline
\end{tabular}
\end{table}
\linespread{1.5}


\end{appendices}

\end{document}